\newlength{\depthofsumsign}
\newtheorem{proposition} {Proposition}
\newtheorem*{proposition-non}{Proposition}
\newtheorem{theorem}{Theorem}
\newtheorem{conjecture}{Conjecture}
\newtheorem{remark}{Remark}
\newtheorem{example}{Example}
\DeclareMathOperator*{\argmax}{arg\,max}
\newenvironment{proof}{\noindent{\bf Proof:}\indent}%
                      {\hfill $\Box$\par}
\newcommand{\sym}[1]{{\sf #1}}
\title{A Lower Bound on the Constant in the Fourier Min-Entropy/Influence Conjecture}
\author{Aniruddha Biswas and Palash Sarkar \\
Indian Statistical Institute \\
203, B.T.Road, Kolkata \\
India 700108. \\
Email: \{anib\_r, palash\}@isical.ac.in
}
\date{\today}
\begin{document}

\maketitle

\begin{abstract}
	We describe a new construction of Boolean functions. A specific instance of our construction provides a 30-variable Boolean function having
	min-entropy/influence ratio to be $128/45 \approx 2.8444$ which is presently the highest known value of this ratio that is achieved by any Boolean function. 
	Correspondingly, $128/45$ is also presently the best known lower bound on the universal constant of the Fourier min-entropy/influence conjecture. \\
	{\small {\bf Keywords:} Boolean function, Fourier transform, Walsh transform, Fourier entropy/influence conjecture, Fourier min-entropy/influence conjecture.}
\end{abstract}

\section{Introduction \label{sec-intro} }

A longstanding open problem in the field of analysis of Boolean functions is the Fourier Entropy/Influence (FEI) conjecture made by Friedgut and Kalai in 
1996~\cite{friedgut1996every}. The FEI conjecture states that there is a universal constant $C$ such that $H(f)\leq C\cdot\sym{Inf}(f)$ for any Boolean function $f$, 
where $H(f)$ and $\sym{Inf}(f)$ denote the Fourier entropy and the total influence of $f$ respectively. For a nice discussion on the importance and applications of the
FEI conjecture, we refer to the blog post by Kalai~\cite{Kalai07}.
The conjecture was verified for various families of Boolean functions (e.g., symmetric functions~\cite{o2011fourier}, read-once formulas~\cite{o2013composition,chakraborty2016upper}, 
decision trees of constant average depth~\cite{wan2014decision}, read-$k$ decision trees for constant $k$~\cite{wan2014decision}, 
functions with exponentially small influence or with linear entropy~\cite{shalev2018fourier}, random linear threshold functions~\cite{DBLP:conf/latin/0001KKLS18},
cryptographic Boolean functions~\cite{gangopadhyay2014fourier}, random functions~\cite{das2011entropy}),
but is still open for the class of all Boolean functions. See~\cite{kelman_gafa2020} for the most recent work towards settling the conjecture.

There has also been research in obtaining lower bounds on the constant $C$ in the FEI conjecture. To show that $C$ is at least some value $\delta$ it is sufficient
to show the existence of a Boolean function whose entropy/influence ratio is $\delta$. 
The first lower bound of $4.615$ was obtained by O'Donnell et al. in~\cite{o2011fourier}. Later O'Donnell and Tan~\cite{o2013composition} provided a recursive construction 
of Boolean functions which showed how to construct a function for which the value of the entropy/influence ratio is at least $6.278944$~\cite{hod2017improved}. The presently 
best known lower bound on $C$ is $6.454784$. This bound was shown by Hod~\cite{hod2017improved} using an extensive asymptotic analysis.

The Fourier min-entropy/influence (FMEI) conjecture was put forward by O'Donnell et al. in 2011~\cite{o2011fourier}. The FMEI conjecture states that there is a universal constant
$D$ such that $H_{\infty}(f)\leq D\cdot\sym{Inf}(f)$ for any Boolean function $f$, where $H_{\infty}(f)$ is the Fourier min-entropy of $f$.
Since $H_{\infty}(f)\leq H(f)$,
the FMEI conjecture is weaker than the FEI conjecture in the sense that settling the FEI conjecture will also settle the FMEI conjecture, but the converse is not
true. It was observed in~\cite{DBLP:conf/latin/0001KKLS18,o2011fourier} that as a consequence of the Kahn-Kalai-Linial theorem~\cite{DBLP:conf/focs/KahnKL88} 
the FMEI conjecture holds for monotone functions and linear threshold functions. The FMEI conjecture for ``regular'' read\mbox{-}$k$ DNFs 
was established by Shalev~\cite{shalev2018fourier}. More recently, Arunachalam et al.~\cite{DBLP:journals/toct/ArunachalamCKSW21} showed that the FMEI conjecture
holds for read\mbox{-}$k$ DNF for constant $k$.

To the best of our knowledge, till date there has been no work on obtaining lower bounds on the universal constant of the FMEI conjecture. Since the FMEI conjecture 
is weaker than the FEI conjecture, any upper bound on the universal constant of the FEI conjecture is also an upper bound on the universal constant of the FMEI conjecture. 
This, however, does not hold for lower bounds, i.e. a lower bound on the universal constant of the FEI conjecture is not necessarily a lower bound on the universal conjecture
of the FMEI conjecture. 

The purpose of the present paper is to obtain a lower bound on the universal constant $D$ of the FMEI conjecture. As in the case of the FEI conjecture, to show that $D$ 
is at least $\delta$, it is sufficient to show the existence of a Boolean function for which the min-entropy/influence ratio is $\delta$. An exhaustive search over
all $n$-variable Boolean functions, with $1\leq n\leq 5$, shows that the maximum value of min-entropy/influence ratio that is achieved by functions of at most 5 variables
is $16/7\approx 2.285714$.
Since an exhaustive search becomes infeasible for $n\geq 6$, it is required to obtain some method of constructing Boolean functions for which the min-entropy/influence
ratio is greater than 16/7. 

For the above mentioned purpose, we first considered the recursive construction of O'Donnell and Tan~\cite{o2013composition}, since this construction proved to be useful
for showing a lower bound on the constant of the FEI conjecture. To analyse this construction in the context of the FMEI conjecture, we derived an expression for 
the min-entropy of the functions obtained using this construction. Since the construction is recursive, one needs an initial function to start the recursion. We 
performed an exhaustive search
over all possible 5-variable initial functions. This yielded a 25-variable function having min-entropy/influence ratio equal to $512/225\approx 2.275556$.
This unfortunately is not useful since 512/225 is less than 16/7, the maximum value of min-entropy/influence ratio that is obtained by exhaustive search over all
5-variable functions. The 25-variable function is obtained in the first step of the O'Donnell-Tan recursion. Considering further steps of the recursion does not
result in a higher value of the min-entropy/influence ratio. 
Further, we did not find any way to 
apply the asymptotic constructions given by Hod~\cite{hod2017improved} in the context of the FEI conjecture for obtaining lower bounds on the constant in the FMEI conjecture.

Our main result is a new construction of Boolean functions. In simple terms, the construction takes an $n$-variable function $g$ and constructs an
$(n+1)$-variable palindromic function $g_0$. An $n(n+1)$-variable function $G_0$ is then constructed by taking the disjoint composition of $g_0$ and $g$ (see~\eqref{eqn-disj-comp}
in Section~\ref{sec-prelim} for the definition of disjoint composition). Under 
certain conditions on $g$, the min-entropy/influence ratio of $G_0$ is greater than that of $g$. By searching over all appropriate 5-variable functions
$g$, we obtain a 30-variable function $G_0$ having min-entropy/influence ratio to be equal to $128/45\approx 2.844444$. In fact, we obtain a total of 384 such 
functions $G_0$. The value $128/45$ is presently the highest achieved value of min-entropy/influence ratio and correspondingly is presently the best known lower bound
on $D$. 

At this point of time, there is no clear evidence of whether the FMEI conjecture is indeed true or not. Our work provides the first step towards an
understanding of this difficult conjecture. Assuming that the conjecture holds, it is of intrinsic mathematical interest to know the value of the
universal constant in the conjecture. Again, our result provides the first step in this direction.
We note that there has been decades long interest in improving values of constants in mathematical results such as the Berry-Esseen 
theorem\footnote{See \url{https://en.wikipedia.org/wiki/Berry\%E2\%80\%93Esseen_theorem} (accessed on 8th November, 2023) for a nice discussion on how the bounds
on the constant has improved over the years.}.

In the final section, we provide a brief description of some experiments that we have carried out for symmetric and rotation-symmetric Boolean functions. 
Based on these experiments, we put forward a new conjecture on entropy/influence and the min-entropy/influence ratios of symmetric Boolean functions. 

In Section~\ref{sec-prelim}, we describe the formal background and the notation. The technique of disjoint composition is required 
for our construction. In Section~\ref{sec-disj-comp-WT}, we derive an expression for the min-entropy of disjoint composition.
The recursive construction of O'Donnell and Tan~\cite{o2013composition} is analysed in Section~\ref{sec-lo-bnd} and shown to
be not useful for the FMEI conjecture. Section~\ref{sec-palin} presents the main construction of the paper and the description of the Boolean functions
achieving the presently best known value of the min-entropy/influence ratio. In Section~\ref{sec-srch-res} we briefly describe our search experiments with
symmetric and rotation-symmetric Boolean functions and state the new conjecture. Finally, in Section~\ref{sec-conclu} we provide concluding remarks.


\section{Background and notation \label{sec-prelim} }
Let $\mathbb{F}_2=\{0,1\}$ denote the finite field consisting of two elements with addition represented by $\oplus$ and multiplication by $\cdot$; often, for 
$x,y\in \mathbb{F}_2$, the product $x\cdot y$ will be written as $xy$. The field of real numbers will be denoted by $\mathbb{R}$ and all logarithms are to the base 2.

For a positive integer $n$, by $[n]$ we will denote the set $\{1,\ldots,n\}$. For $\mathbf{x}=(x_1,\ldots,x_n)\in \mathbb{F}_2^n$, the support of $\mathbf{x}$ will 
be denoted by $\sym{supp}(\mathbf{x})$ which is the set $\{i:x_i=1\}$; the weight of $\mathbf{x}$ will be denoted by $\sym{wt}(\mathbf{x})$ and is equal to 
$\#\sym{supp}(\mathbf{x})$.  For $i\in [n]$, $\mathbf{e}_i$ denotes the vector in $\mathbb{F}_2^n$ whose $i$-th component is 1 and all other components are 0. 
By $\mathbf{1}_n$ we will denote the all-one vector of length $n$.
For $\mathbf{x}=(x_1,\ldots,x_n),\mathbf{y}=(y_1,\ldots,y_n)\in\mathbb{F}_2^n$, 
the inner product $\langle \mathbf{x},\mathbf{y}\rangle$ of $\mathbf{x}$ and $\mathbf{y}$ is defined to be $\langle \mathbf{x},\mathbf{y}\rangle =x_1y_1\oplus \cdots \oplus x_ny_n$.

For a positive integer $n$, an $n$-variable Boolean function $f$ is a map $f:\mathbb{F}_2^n\rightarrow \mathbb{F}_2$. Variables will be written in upper case and vector of variables
in bold upper case. For $\mathbf{X}=(X_1,\ldots,X_n)$, an $n$-variable Boolean function $f$ will be written as $f(\mathbf{X})$ to denote the dependence on the
variables $X_1,\ldots,X_n$. 

The support of a Boolean function $f$ will be denoted by $\sym{supp}(f)$ which is the set $\{\mathbf{x}:f(\mathbf{x})=1\}$; the weight of $f$ will be denoted by $\sym{wt}(f)$ and 
is equal to $\#\sym{supp}(f)$. 
The function $f$ is said to be balanced if $\sym{wt}(f)=2^{n-1}$.

The Fourier transform of a function $\psi:\mathbb{F}_2^n\rightarrow \mathbb{R}$ is the map $\widehat{\psi}:\mathbb{F}_2^n\rightarrow \mathbb{R}$, which is defined as follows. 
For $\bm{\alpha}\in\mathbb{F}_2^n$,
\begin{eqnarray}
	\widehat{\psi}(\bm{\alpha}) 
	& = & \frac{1}{2^n} \sum_{\mathbf{x}\in\mathbb{F}_2^n} \psi(\mathbf{x})(-1)^{\langle \mathbf{x},\bm{\alpha} \rangle}. \label{eqn-fourier}
\end{eqnarray}
The (normalised) Walsh transform of an $n$-variable Boolean function $f$ is the map $W_f:\mathbb{F}_2^n\rightarrow \mathbb{R}$, which is defined as follows.
For $\bm{\alpha}\in\mathbb{F}_2^n$,
\begin{eqnarray*}
	W_f(\bm{\alpha}) & = & \frac{1}{2^n} \sum_{\mathbf{x}\in\mathbb{F}_2^n} (-1)^{f(\mathbf{x}) \oplus \langle \mathbf{x},\bm{\alpha} \rangle} 
\end{eqnarray*}
In other words, the Walsh transform of $f$ is the Fourier transform of $(-1)^f$. 

From Parseval's theorem, it follows that $\sum_{\bm{\alpha}\in \mathbb{F}_2^n} W_f^2(\bm{\alpha}) = 1$.
So the values $\left\{W_f^2(\bm{\alpha})\right\}_{\bm{\alpha}\in\mathbb{F}_2^n}$ can be considered to be a probability distribution on $\mathbb{F}_2^n$, which assigns to 
$\bm{\alpha}\in\mathbb{F}_2^n$, the probability $W_f^2(\bm{\alpha})$. 
For an $n$-variable Boolean function $f$, its Fourier entropy $H(f)$ and min-entropy $H_{\infty}(f)$ are defined as follows. 
\begin{eqnarray}
	H(f) = \sum_{\substack{\bm{\alpha}\in \mathbb{F}_2^n\\W_f^2(\bm{\alpha})\neq 0}}W_f^2(\bm{\alpha}) \log \frac{1}{W_f^2(\bm{\alpha})}, & & 
	H_{\infty}(f) = \min_{\substack{\bm{\alpha}\in \mathbb{F}_2^n\\W_f^2(\bm{\alpha})\neq 0}}\log \frac{1}{W_f^2(\bm{\alpha})}. \label{eqn-H_infty}
\end{eqnarray}
For an $n$-variable Boolean function $f$, its influence $\sym{Inf}(f)$ is defined as follows.
\begin{eqnarray}\label{eqn-inf}
	\sym{Inf}(f) & = & \sum_{i=1}^n\Pr_{\mathbf{x}\in\mathbb{F}_2^n}[f(\mathbf{x}) \neq f(\mathbf{x}\oplus \mathbf{e}_i)]. 
\end{eqnarray}
The connection of influence to the Walsh transform is given by the following result~\cite{DBLP:conf/focs/KahnKL88} (see also Theorem~2.38 in~\cite{o2014analysis}).
\begin{eqnarray}\label{eqn-inf-WT}
	\sym{Inf}(f) & = & \sum_{\bm{\alpha} \in \mathbb{F}_2^n} \sym{wt}(\bm{\alpha}) W_f^2(\bm{\alpha}).
\end{eqnarray}

We next state the two conjectures connecting entropy and influence.

\paragraph{The Fourier entropy/influence (FEI) conjecture~\cite{friedgut1996every}.} There exists a universal constant $C$ such that for any integer $n\geq 1$ and
for any $n$-variable Boolean function $f$, $H(f)\leq C\cdot\sym{Inf}(f)$. 

\paragraph{The Fourier Min-entropy/influence (FMEI) conjecture~\cite{o2011fourier}.} There exists a universal constant $D$ such that for any integer $n\geq 1$ and 
for any $n$-variable Boolean function $f$, $H_{\infty}(f)\leq D\cdot\sym{Inf}(f)$. \\


\paragraph{Composition.}
For positive integers $n$ and $k$, an $(n,k)$ vectorial Boolean function is a map $\mathscr{G}:\mathbb{F}_2^n\rightarrow \mathbb{F}_2^k$. The function
$\mathscr{G}$ can be written as $\mathscr{G}(\mathbf{X})=(g_1(\mathbf{X}),\ldots,g_k(\mathbf{X}))$, where $g_1,\ldots,g_k$ are $n$-variable Boolean functions. 
Given a $k$-variable Boolean function $f$ and an $(n,k)$ vectorial Boolean function $\mathscr{G}$, their composition is the $n$-variable Boolean function
$(f\circ \mathscr{G})(\mathbf{X})=f(g_1(\mathbf{X}),\ldots,g_k(\mathbf{X}))$.
The Walsh transform of $f\circ \mathscr{G}$ is given by the following result.
\begin{theorem}\label{thm-GS}\cite{DBLP:journals/tit/GuptaS05a}
	Let $\mathscr{G}$ be an $(n,k)$ vectorial Boolean function and $f$ be a $k$-variable Boolean function. Then for any $\mathbf{u}\in\mathbb{F}_2^n$,
\begin{eqnarray}\label{general_composition_theorem}
W_{f\circ \mathscr{G}}(\mathbf{u}) &= \sum_{\mathbf{v}\in\mathbb{F}_2^k}W_f(\mathbf{v})W_{(l_{\mathbf{v}}\circ \mathscr{G})}(\mathbf{u}),
\end{eqnarray}
where $(l_{\mathbf{v}}\circ \mathscr{G})(\mathbf{X})=\langle \mathbf{v},\mathscr{G}(\mathbf{X})\rangle$.
\end{theorem}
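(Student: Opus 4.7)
The plan is to derive the identity directly from the definition of the Walsh transform together with Fourier inversion applied to the function $(-1)^f$. Starting from~\eqref{eqn-fourier} with $\psi(\mathbf{y})=(-1)^{f(\mathbf{y})}$, the Walsh transform coefficients $W_f(\mathbf{v})$ are precisely the Fourier coefficients of $(-1)^f$, and standard Fourier inversion over $\mathbb{F}_2^k$ gives
\begin{eqnarray*}
(-1)^{f(\mathbf{y})} & = & \sum_{\mathbf{v}\in\mathbb{F}_2^k} W_f(\mathbf{v})\,(-1)^{\langle \mathbf{v},\mathbf{y}\rangle},\qquad \mathbf{y}\in\mathbb{F}_2^k.
\end{eqnarray*}
This is the only non-definitional ingredient required, and it follows immediately from the orthogonality of the characters $\mathbf{y}\mapsto(-1)^{\langle \mathbf{v},\mathbf{y}\rangle}$ under the inner product $\frac{1}{2^k}\sum_{\mathbf{y}}$.

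Next, I would specialise $\mathbf{y}=\mathscr{G}(\mathbf{x})$ and plug the resulting expression into the definition of $W_{f\circ\mathscr{G}}(\mathbf{u})$. Writing
\begin{eqnarray*}
W_{f\circ \mathscr{G}}(\mathbf{u})
& = & \frac{1}{2^n}\sum_{\mathbf{x}\in\mathbb{F}_2^n}(-1)^{f(\mathscr{G}(\mathbf{x}))}\,(-1)^{\langle \mathbf{x},\mathbf{u}\rangle} \\
& = & \frac{1}{2^n}\sum_{\mathbf{x}\in\mathbb{F}_2^n}\left(\sum_{\mathbf{v}\in\mathbb{F}_2^k} W_f(\mathbf{v})\,(-1)^{\langle \mathbf{v},\mathscr{G}(\mathbf{x})\rangle}\right)(-1)^{\langle \mathbf{x},\mathbf{u}\rangle},
\end{eqnarray*}
and then interchanging the two finite summations, I would pull $W_f(\mathbf{v})$ out of the $\mathbf{x}$-sum to obtain
\begin{eqnarray*}
W_{f\circ \mathscr{G}}(\mathbf{u})
& = & \sum_{\mathbf{v}\in\mathbb{F}_2^k} W_f(\mathbf{v})\cdot\frac{1}{2^n}\sum_{\mathbf{x}\in\mathbb{F}_2^n}(-1)^{\langle \mathbf{v},\mathscr{G}(\mathbf{x})\rangle\oplus\langle \mathbf{x},\mathbf{u}\rangle}.
\end{eqnarray*}
Recognising that the Boolean function $(l_{\mathbf{v}}\circ\mathscr{G})(\mathbf{x})=\langle \mathbf{v},\mathscr{G}(\mathbf{x})\rangle$ has Walsh transform exactly equal to the inner $\frac{1}{2^n}$-normalised sum on the right, the claimed formula falls out.

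There is really no serious obstacle to overcome here; the content is a routine Fourier-inversion / swap-of-sums argument, and the statement is essentially a reformulation of how a composition interacts with the character basis. The only point to be careful about is matching the normalisation convention of~\eqref{eqn-fourier} (the $1/2^n$ lives on the forward transform, so inversion has no prefactor), and ensuring that the interchange of the two summations is justified, which it is since both index sets are finite.
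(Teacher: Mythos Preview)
Your argument is correct and is the standard proof of this identity: expand $(-1)^{f(\mathbf{y})}$ via Fourier inversion, substitute $\mathbf{y}=\mathscr{G}(\mathbf{x})$, swap the two finite sums, and identify the inner sum as $W_{l_{\mathbf{v}}\circ\mathscr{G}}(\mathbf{u})$. Note, however, that the paper does not actually prove Theorem~\ref{thm-GS}; it is quoted from~\cite{DBLP:journals/tit/GuptaS05a} as a known result and used as a black box, so there is no ``paper's own proof'' to compare against here.
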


Let $k$ and $l$ be positive integers and $n=kl$. For $\mathbf{x}\in \mathbb{F}_2^n$ and $1\leq i\leq k$, by $\mathbf{x}^{(i)}$ we denote the vector 
$(x_{(i-1)l+1},\ldots,x_{il})\in \mathbb{F}_2^l$. By a slight abuse of notation, we will write $\mathbf{x}=(\mathbf{x}^{(1)},\ldots,\mathbf{x}^{(k)})$.
Let $f$ and $g$ be Boolean functions on $k$ and $l$ variables respectively and $n=kl$. Let $\mathscr{G}$ be the $(n,k)$ vectorial Boolean function given by
$\mathscr{G}(\mathbf{X})=(g(\mathbf{X}^{(1)}),\ldots,g(\mathbf{X}^{(k)}))$.
The {\em disjoint composition} of $f$ and $g$, which we will denote as $f\diamond g$, is the $n$-variable Boolean function $f\circ \mathscr{G}$, i.e. 
\begin{eqnarray}
	(f\diamond g)(\mathbf{X}) & = & (f\circ \mathscr{G})(\mathbf{X})=f(g(\mathbf{X}^{(1)}),\ldots,g(\mathbf{X}^{(k)})).  \label{eqn-disj-comp}
\end{eqnarray}
The following result provides the entropy and influence of $f\diamond g$.
\begin{theorem}[simplified form of Proposition~$2$ in~\cite{o2013composition}]\label{thm-OT-comp}
	Let $f$ be a Boolean function and $g$ be a balanced Boolean function. Then,
	\begin{enumerate}
		\item $\sym{Inf}(f\diamond g) = \sym{Inf}(g)\cdot \sym{Inf}(f)$.
		\item $H(f\diamond g) = H(f) + H(g)\cdot \sym{Inf}(f)$.
	\end{enumerate}
\end{theorem}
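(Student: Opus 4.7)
The plan is to derive a clean product formula for $W_{f\diamond g}(\mathbf{u})$ and then feed it into the definitions of influence and entropy, using balancedness of $g$ to collapse the sums.

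First I would apply Theorem~\ref{thm-GS} with $\mathscr{G}(\mathbf{X})=(g(\mathbf{X}^{(1)}),\ldots,g(\mathbf{X}^{(k)}))$. Because $\mathscr{G}$ acts on disjoint blocks of variables, the function $(l_{\mathbf{v}}\circ \mathscr{G})(\mathbf{X})=\bigoplus_{i:v_i=1} g(\mathbf{X}^{(i)})$ is a sum of Boolean functions on disjoint variables, so its Walsh transform factorises across blocks:
\begin{eqnarray*}
W_{l_{\mathbf{v}}\circ \mathscr{G}}(\mathbf{u}) & = & \prod_{i:v_i=1} W_g(\mathbf{u}^{(i)}) \cdot \prod_{i:v_i=0} [\mathbf{u}^{(i)}=\mathbf{0}].
\end{eqnarray*}
Since $g$ is balanced, $W_g(\mathbf{0})=0$, so the sum in~\eqref{general_composition_theorem} collapses to the unique $\mathbf{v}=\mathbf{v}_{\mathbf{u}}$ whose support as a subset of $[k]$ equals $S(\mathbf{u})\defeq\{i:\mathbf{u}^{(i)}\neq\mathbf{0}\}$, giving the clean identity
\begin{eqnarray*}
W_{f\diamond g}^2(\mathbf{u}) & = & W_f^2(\mathbf{v}_{\mathbf{u}}) \prod_{i\in S(\mathbf{u})} W_g^2(\mathbf{u}^{(i)}).
\end{eqnarray*}
Establishing this factorisation is the central step; once it is in place both parts reduce to straightforward bookkeeping.

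For part~1, I would substitute into~\eqref{eqn-inf-WT}, group the outer sum by $\mathbf{v}=\mathbf{v}_{\mathbf{u}}$, and write $\sym{wt}(\mathbf{u})=\sum_{j\in S(\mathbf{u})}\sym{wt}(\mathbf{u}^{(j)})$. Using balancedness so that $\sum_{\mathbf{u}^{(i)}}W_g^2(\mathbf{u}^{(i)})=1$ and $\sum_{\mathbf{u}^{(j)}}\sym{wt}(\mathbf{u}^{(j)})W_g^2(\mathbf{u}^{(j)})=\sym{Inf}(g)$ (Parseval and~\eqref{eqn-inf-WT} applied to $g$), the inner sum over blocks evaluates to $|S|\cdot\sym{Inf}(g)=\sym{wt}(\mathbf{v})\sym{Inf}(g)$. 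Summing over $\mathbf{v}$ with weight $\sym{wt}(\mathbf{v})W_f^2(\mathbf{v})$ gives $\sym{Inf}(f)\cdot\sym{Inf}(g)$, as required.

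For part~2, I would take logs of the product formula to obtain
\begin{eqnarray*}
-\log W_{f\diamond g}^2(\mathbf{u}) & = & -\log W_f^2(\mathbf{v}_{\mathbf{u}}) - \sum_{i\in S(\mathbf{u})} \log W_g^2(\mathbf{u}^{(i)}),
\end{eqnarray*}
multiply by $W_{f\diamond g}^2(\mathbf{u})$ and sum. The ``$-\log W_f^2$'' part factors as $\sum_{\mathbf{v}}W_f^2(\mathbf{v})(-\log W_f^2(\mathbf{v}))\prod_{i\in\sym{supp}(\mathbf{v})}\sum_{\mathbf{u}^{(i)}}W_g^2(\mathbf{u}^{(i)})=H(f)$, again using balancedness to get each inner block-sum equal to $1$. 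The ``$-\log W_g^2$'' part, by the same argument used for influence with $\sym{wt}(\mathbf{u}^{(j)})$ replaced by $-\log W_g^2(\mathbf{u}^{(j)})$, contributes $\sum_{\mathbf{v}}W_f^2(\mathbf{v})\sym{wt}(\mathbf{v})H(g)=H(g)\sym{Inf}(f)$. Adding the two pieces yields the stated identity.

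The main obstacle is purely notational: keeping the parametrisation by $\mathbf{v}$ versus by $\mathbf{u}$ straight, and carefully invoking $W_g(\mathbf{0})=0$ in the right places so that only one $\mathbf{v}$ survives in~\eqref{general_composition_theorem} and the full-sum identities $\sum_{\mathbf{u}^{(i)}}W_g^2(\mathbf{u}^{(i)})=1$ may be used without a ``$\mathbf{u}^{(i)}\neq\mathbf{0}$'' restriction. There is no genuine analytic difficulty beyond the initial factorisation.
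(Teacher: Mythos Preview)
Your proposal is correct, but there is nothing to compare it against: the paper does not prove Theorem~\ref{thm-OT-comp}. It is stated in the preliminaries as a simplified form of Proposition~2 of~\cite{o2013composition} and used as a black box.

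That said, your central step---the factorisation $W_{f\diamond g}^2(\mathbf{u})=W_f^2(\mathbf{v}_{\mathbf{u}})\prod_{i\in S(\mathbf{u})}W_g^2(\mathbf{u}^{(i)})$ obtained from Theorem~\ref{thm-GS} together with $W_g(\mathbf{0})=0$---is exactly what the paper establishes later as Theorem~\ref{thm-WT-DC} (specialised to $g_1=\cdots=g_k=g$). So your argument and the paper's development are fully aligned on the Walsh-transform side; you simply go further and feed that factorisation into~\eqref{eqn-inf-WT} and the entropy definition to recover the O'Donnell--Tan identities, whereas the paper leaves those identities as a citation and instead uses the factorisation to compute $H_\infty(f\diamond g)$ in Theorem~\ref{thm-ME-DC}. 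The bookkeeping you describe for both parts is correct, including the use of balancedness to drop the ``$\mathbf{u}^{(i)}\neq\mathbf{0}$'' restriction when invoking Parseval and~\eqref{eqn-inf-WT} for $g$.
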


\paragraph{O'Donnell-Tan recursive construction.}
The following recursive construction of Boolean functions was introduced by O'Donnell and Tan~\cite{o2013composition}. Let $g$ be an $l$-variable Boolean function. Using
$g$, a sequence of Boolean functions $f_m$, $m\geq 0$, is defined in the following manner.
\begin{eqnarray}\label{eqn-OT-rec}
	\left.\begin{array}{rcll}
	f_0 & = & g, \\
	f_m & = & g \diamond f_{m-1} & \mbox{if } m\geq 1.
	\end{array}\right\}
\end{eqnarray}
It is easy to see that for $m\geq 0$, $f_m$ is a map from $\mathbb{F}_2^{l^{m+1}}\rightarrow \mathbb{F}_2$. 
For the recursion defined in~\eqref{eqn-OT-rec}, in the case where the initial function $g$ is balanced, the following was proved in~\cite{o2013composition}.
\begin{eqnarray}\label{eqn-OT-ratio}
	\frac{H(f_m)}{\sym{Inf}(f_m)} & = & \frac{H(g)}{\sym{Inf}(g)} + \frac{H(g)}{\sym{Inf}(g)(\sym{Inf}(g)-1)} - \frac{H(g)}{\sym{Inf}(g)^{m+1}(\sym{Inf}(g)-1)}.
\end{eqnarray}
Consequently, $\lim_{m\rightarrow \infty} H(f_m)/\sym{Inf}(f_m) = H(g)/(\sym{Inf}(g)-1).$
So for any balanced Boolean function $g$, $H(g)/(\sym{Inf}(g)-1)$ is a lower bound on the constant in the FEI conjecture.
\begin{remark}\label{rem-bal-g-fm}
	It was shown in~\cite{o2013composition} that for the construction in~\eqref{eqn-OT-rec} if the initial function $g$ is balanced, then $f_m$ is balanced for all $m\geq 1$.
\end{remark}

\section{Min-Entropy of disjoint composition\label{sec-disj-comp-WT}}
We wish to compute the min-entropy of disjoint composition. We start with the following result which is somewhat more general than what we need.

\begin{theorem}\label{thm-WT-DC}
	Let $k$ and $l$ be positive integers and $n=kl$. Let $\mathscr{G}$ be an $(n,k)$ vectorial Boolean function such that 
	$\mathscr{G}(\mathbf{X})=(g_1(\mathbf{X}^{(1)}),\ldots,g_k(\mathbf{X}^{(k)}))$, where $g_1,\ldots,g_k$ are $l$-variable
	balanced Boolean functions. Then for any $k$-variable Boolean function $f$, 
\begin{eqnarray}\label{general_disjoint_composition_theorem}
	\begin{array}{l}
		W_{f\circ \mathscr{G}}(\mathbf{u})=
		\left\{ \begin{array}{ll}
				W_f(\mathbf{0}_k) &\text{ if $\mathbf{u}=\mathbf{0}_{n}$,}\\
				W_f(\mathbf{w_u})\prod_{i\in \sym{supp}({\mathbf{w_u}})}W_{g_i}\left(\mathbf{u}^{(i)}\right) &\text{ otherwise.}
			\end{array} \right. 
	\end{array}
\end{eqnarray}
In~\eqref{general_disjoint_composition_theorem}, 
for $\mathbf{u}\in \mathbb{F}_2^n$ written as $\mathbf{u}=(\mathbf{u}^{(1)},\ldots,\mathbf{u}^{(k)})$, by $\mathbf{w}_{\mathbf{u}}$ we 
denote the vector in $\mathbb{F}_2^k$ whose $i$-th position, $1\leq i\leq k$, is 1 if and only if $\mathbf{u}^{(i)}\neq \mathbf{0}_l$, i.e. $\mathbf{w}_{\mathbf{u}}$ encodes 
whether the $l$-bit blocks of $\mathbf{u}$ are zero or not.
\end{theorem}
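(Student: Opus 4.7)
The plan is to invoke the Gupta--Sarkar composition identity (Theorem~\ref{thm-GS}) and then exploit the block structure of $\mathscr{G}$ together with the balancedness of each $g_i$ to collapse the sum to a single term.

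First I would write, for any $\mathbf{u}\in\mathbb{F}_2^n$,
\begin{eqnarray*}
W_{f\circ\mathscr{G}}(\mathbf{u}) & = & \sum_{\mathbf{v}\in\mathbb{F}_2^k} W_f(\mathbf{v})\, W_{l_{\mathbf{v}}\circ\mathscr{G}}(\mathbf{u}),
\end{eqnarray*}
and observe that $(l_{\mathbf{v}}\circ\mathscr{G})(\mathbf{X})=\bigoplus_{i=1}^{k} v_i\, g_i(\mathbf{X}^{(i)})$ is an XOR of functions on pairwise disjoint variable sets. A standard calculation (split the sum defining $W$ into a product over the $k$ blocks) then yields the factorisation
\begin{eqnarray*}
W_{l_{\mathbf{v}}\circ\mathscr{G}}(\mathbf{u}) & = & \prod_{i=1}^{k} W_{v_i g_i}\bigl(\mathbf{u}^{(i)}\bigr).
\end{eqnarray*}

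Next I would analyse each factor according to the value of $v_i$. When $v_i=0$, the function $v_i g_i$ is identically zero, so the factor is $1$ if $\mathbf{u}^{(i)}=\mathbf{0}_l$ and $0$ otherwise. When $v_i=1$, the factor is $W_{g_i}(\mathbf{u}^{(i)})$, and since $g_i$ is balanced we have $W_{g_i}(\mathbf{0}_l)=0$. These two observations together force the product to vanish unless
\begin{eqnarray*}
\text{supp}(\mathbf{v}) & = & \{\,i\in[k]: \mathbf{u}^{(i)}\neq\mathbf{0}_l\,\} \;=\; \text{supp}(\mathbf{w}_{\mathbf{u}}),
\end{eqnarray*}
i.e. unless $\mathbf{v}=\mathbf{w}_{\mathbf{u}}$. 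For this unique surviving $\mathbf{v}$, the factors with $i\notin\text{supp}(\mathbf{w}_{\mathbf{u}})$ contribute $1$ while those with $i\in\text{supp}(\mathbf{w}_{\mathbf{u}})$ contribute $W_{g_i}(\mathbf{u}^{(i)})$.

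Substituting back into the Gupta--Sarkar sum gives
\begin{eqnarray*}
W_{f\circ\mathscr{G}}(\mathbf{u}) & = & W_f(\mathbf{w}_{\mathbf{u}}) \prod_{i\in\text{supp}(\mathbf{w}_{\mathbf{u}})} W_{g_i}\bigl(\mathbf{u}^{(i)}\bigr),
\end{eqnarray*}
which specialises, when $\mathbf{u}=\mathbf{0}_n$, to $W_f(\mathbf{0}_k)$ via the empty product convention. I do not anticipate a serious obstacle here; the only point that requires some care is the twin use of balancedness: it is needed both to kill spurious terms with $v_i=1,\ \mathbf{u}^{(i)}=\mathbf{0}_l$ and, jointly with the $v_i=0$ analysis, to pin down $\mathbf{v}$ uniquely as $\mathbf{w}_{\mathbf{u}}$.
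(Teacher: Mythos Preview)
Your proposal is correct and follows essentially the same argument as the paper: invoke Theorem~\ref{thm-GS}, factor $W_{l_{\mathbf{v}}\circ\mathscr{G}}(\mathbf{u})$ as a product over the $k$ blocks, and use the case analysis on $v_i$ together with balancedness of the $g_i$ to show that only $\mathbf{v}=\mathbf{w}_{\mathbf{u}}$ survives. The paper's proof merely names the block factors $B_i(v_i,\mathbf{u}^{(i)})$ explicitly, but the logic is identical.
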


\begin{proof}
	The proof follows from an application of Theorem~\ref{thm-GS}. 

Note that for $\mathbf{v}=(v_1,\ldots,v_k)\in\mathbb{F}_2^k$, 
	$(l_{\mathbf{v}}\circ \mathscr{G})(\mathbf{X})=v_1\cdot g_1(\mathbf{X}^{(1)})\oplus\cdots \oplus v_k\cdot g_k(\mathbf{X}^{(k)})$.
So for $\mathbf{u}=(\mathbf{u}^{(1)},\ldots,\mathbf{u}^{(k)})\in\mathbb{F}_2^{n}$,
\begin{flalign*}
W_{(l_{\mathbf{v}}\circ \mathscr{G})}(\mathbf{u}) &= \frac{1}{2^{n}}\sum_{\mathbf{x}\in\mathbb{F}_2^{n}}(-1)^{(l_{\mathbf{v}}\circ \mathscr{G})(\mathbf{x})\oplus \langle\mathbf{u},\mathbf{x}\rangle} \\
&= \frac{1}{2^{n}}\sum_{\mathbf{x}^{(1)},\ldots,\mathbf{x}^{(k)}\in\mathbb{F}_2^{l}}(-1)^{v_1\cdot g_1\left(\mathbf{x}^{(1)}\right)\oplus\langle\mathbf{u}^{(1)},\mathbf{x}^{(1)}\rangle\oplus\cdots\oplus v_k\cdot g_k\left(\mathbf{x}^{(k)}\right)\oplus\langle\mathbf{u}^{(k)},\mathbf{x}^{(k)}\rangle} \\
&= \prod_{i\in[k]}\frac{1}{2^{l}}\sum_{\mathbf{x}^{(i)}\in\mathbb{F}_2^{l}}(-1)^{v_i\cdot g_i\left(\mathbf{x}^{(i)}\right)\oplus\langle\mathbf{u}^{(i)},\mathbf{x}^{(i)}\rangle}.
\end{flalign*}
For $i\in[k]$, let
    $B_i\left(v_i,\mathbf{u}^{(i)}\right)
	=\frac{1}{2^{l}}\sum_{\mathbf{x}^{(i)}\in\mathbb{F}_2^{l}}(-1)^{v_i\cdot g_i\left(\mathbf{x}^{(i)}\right)\oplus\langle\mathbf{u}^{(i)},\mathbf{x}^{(i)}\rangle}.$
Using~(\ref{general_composition_theorem}) we have,
\begin{equation}\label{inter_equation}
W_{f\circ \mathscr{G}}(\mathbf{u}) = \sum_{\mathbf{v}\in\mathbb{F}_2^k}W_f(\mathbf{v})W_{(l_{\mathbf{v}}\circ \mathscr{G})}(\mathbf{u})
= \sum_{\mathbf{v}\in\mathbb{F}_2^k}W_f(\mathbf{v})\prod_{i\in[k]}B_i\left(v_i, \mathbf{u}^{(i)}\right).
\end{equation}

Let us now consider $B_i\left(v_i,\mathbf{u}^{(i)}\right)$. Note that $B_i\left(0,\mathbf{u}^{(i)}\right)$ is equal to $1$ or $0$ according as
$\mathbf{u}^{(i)}$ is equal to $\mathbf{0}_l$ or not. Further, $B_i\left(1,\mathbf{u}^{(i)}\right)=W_{g_i}\left(\mathbf{u}^{(i)}\right)$. Since
it is given that $g_i$ is balanced, so $B_i\left(1,\mathbf{0}_l\right)=0$.

	For $\mathbf{u}\in \mathbb{F}_2^n$, the $i$-th bit of $\mathbf{w}_{\mathbf{u}}$ is 1 if and only if the $i$-th block of $\mathbf{u}$ is non-zero. 
	For $\mathbf{v}\in \mathbf{F}_2^k$ such that $\mathbf{v}\neq \mathbf{w}_{\mathbf{u}}$, there is a $j\in [k]$ such that either
	$v_j=0$ and $\mathbf{u}^{(j)}\neq \mathbf{0}_l$, or $v_j=1$ and $\mathbf{u}^{(j)}=\mathbf{0}_l$; in either case, $B_j\left(v_j,\mathbf{u}^{(j)}\right)=0$ and
	so $\prod_{i\in[k]}B_i\left(v_i, \mathbf{u}^{(i)}\right)=0$. On the other hand, for $\mathbf{v}=\mathbf{w}_{\mathbf{u}}$, if $v_i=0$ then 
	$\mathbf{u}^{(i)}= \mathbf{0}_l$ which implies $B_i\left(v_i,\mathbf{u}^{(i)}\right)=1$; and if $v_i=1$ then $\mathbf{u}^{(i)}\neq \mathbf{0}_l$ which implies 
	$B_i\left(v_i,\mathbf{u}^{(i)}\right)=W_{g_i}\left(\mathbf{u}^{(i)}\right)$; so
	$\prod_{i\in[k]}B_i\left(v_i, \mathbf{u}^{(i)}\right)=\prod_{i\in \sym{supp}(\mathbf{v})}W_{g_i}\left(\mathbf{u}^{(i)}\right)$.
From this, we get the required result.
\end{proof}
Suppose in Theorem~\ref{thm-WT-DC}, the $g_i$'s are all equal, i.e. $g_1=\cdots=g_k=g$. Then $f\circ \mathscr{G}=f\diamond g$ and Theorem~\ref{thm-WT-DC} provides
the Walsh transform of disjoint composition in the case where $g$ is balanced. 
In this case, the min-entropy is given by the following result.

\begin{theorem}\label{thm-ME-DC}
Let $k$ and $l$ be positive integers, $f$ be a $k$-variable Boolean function, and $g$ be an $l$-variable balanced Boolean function. 
	For $0\leq i\leq k$, let $a_i=\max_{\{\mathbf{w}:\sym{wt}(\mathbf{w})=i\}}W_f^2(\mathbf{w})$. Then
\begin{equation*}
	H_{\infty}(f\diamond g) = \min_{i\in \{0,\ldots,k\}, \\ a_i>0} (-\log(a_i) + i\cdot H_{\infty}(g)).
\end{equation*}
\end{theorem}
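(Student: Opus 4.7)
The plan is to apply Theorem~\ref{thm-WT-DC} in the special case $g_1=\cdots=g_k=g$, which yields
\[
W_{f\diamond g}^2(\mathbf{u}) \;=\; W_f^2(\mathbf{w}_{\mathbf{u}})\prod_{i\in\sym{supp}(\mathbf{w}_{\mathbf{u}})}W_g^2\!\left(\mathbf{u}^{(i)}\right)
\]
for $\mathbf{u}\neq \mathbf{0}_n$, and $W_{f\diamond g}^2(\mathbf{0}_n)=W_f^2(\mathbf{0}_k)$ (which can be absorbed into the same formula by reading the empty product as $1$). Since $H_\infty(f\diamond g)=-\log \max_{W_{f\diamond g}^2(\mathbf{u})\neq 0}W_{f\diamond g}^2(\mathbf{u})$, the strategy is to compute this maximum by partitioning the set of $\mathbf{u}\in \mathbb{F}_2^n$ according to the weight $i=\sym{wt}(\mathbf{w}_{\mathbf{u}})$ of the ``block-indicator'' vector, optimize within each class, and finally minimize $-\log$ over $i$.

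For a fixed $i\in\{0,\ldots,k\}$, I would factor the optimization as follows. The Walsh-square splits into the factor $W_f^2(\mathbf{w}_{\mathbf{u}})$, which depends only on $\mathbf{w}_{\mathbf{u}}$, and the product of $W_g^2(\mathbf{u}^{(j)})$ over the $i$ indices $j\in\sym{supp}(\mathbf{w}_{\mathbf{u}})$, whose factors are independent once the support is fixed. Thus the maximum of $W_{f\diamond g}^2(\mathbf{u})$ over $\mathbf{u}$ with $\sym{wt}(\mathbf{w}_{\mathbf{u}})=i$ and $W_{f\diamond g}^2(\mathbf{u})\neq 0$ equals $a_i\cdot M^{i}$, where $M=\max_{W_g^2(\mathbf{v})\neq 0}W_g^2(\mathbf{v})$, provided $a_i>0$ (if $a_i=0$, no $\mathbf{u}$ in this class contributes a nonzero Walsh coefficient). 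By definition $M=2^{-H_\infty(g)}$, and the maximum over each block of $\mathbf{u}^{(j)}$ (for $j$ in the support) may indeed range over all $\mathbf{v}\neq\mathbf{0}_l$; the constraint $\mathbf{u}^{(j)}\neq \mathbf{0}_l$ is automatic since $g$ is balanced forces $W_g(\mathbf{0}_l)=0$, so the maximum over nonzero Walsh coefficients is unaffected.

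Combining, the overall maximum of $W_{f\diamond g}^2(\mathbf{u})$ over $\mathbf{u}$ with $W_{f\diamond g}^2(\mathbf{u})\neq 0$ equals $\max_{\, i:\, a_i>0}\, a_i\cdot 2^{-iH_\infty(g)}$, and taking $-\log$ converts this maximum into the claimed minimum
\[
H_\infty(f\diamond g)\;=\;\min_{\substack{i\in\{0,\ldots,k\}\\ a_i>0}}\bigl(-\log a_i + i\cdot H_\infty(g)\bigr).
\]

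The proof is essentially a careful bookkeeping exercise; there is no significant obstacle. The only points that need a little care are (i) interpreting the empty product (case $i=0$) so that $a_0=W_f^2(\mathbf{0}_k)$ contributes the term $-\log W_f^2(\mathbf{0}_k)$, matching Theorem~\ref{thm-WT-DC}'s first case; (ii) restricting the minimum to those $i$ with $a_i>0$, since otherwise no valid $\mathbf{u}$ exists and the corresponding term should not appear; and (iii) justifying that the per-block maxima can be attained simultaneously because, once the support of $\mathbf{w}_{\mathbf{u}}$ is fixed, the blocks $\mathbf{u}^{(j)}$ for $j$ in the support are independent variables, so one simply selects an optimal $\mathbf{v}^{\ast}$ achieving $W_g^2(\mathbf{v}^{\ast})=2^{-H_\infty(g)}$ and sets $\mathbf{u}^{(j)}=\mathbf{v}^{\ast}$ in every such block.
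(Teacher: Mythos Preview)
Your proposal is correct and follows essentially the same approach as the paper's proof: both apply Theorem~\ref{thm-WT-DC} with $g_1=\cdots=g_k=g$, stratify the maximization of $W_{f\diamond g}^2(\mathbf{u})$ by the weight $i=\sym{wt}(\mathbf{w}_{\mathbf{u}})$, optimize the $W_f^2$-factor and the product of $W_g^2$-factors separately to obtain $a_i\cdot 2^{-iH_\infty(g)}$, and then take logarithms. Your treatment is in fact slightly more explicit than the paper's about the edge cases (the $i=0$ term, the restriction to $a_i>0$, and why the maximizer of $W_g^2$ may be taken over $\mathbf{v}\neq\mathbf{0}_l$ thanks to $g$ being balanced).
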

\begin{proof}
	Let $n=kl$ and $\mathscr{G}$ be the $(n,k)$ vectorial Boolean function 
	$\mathscr{G}(\mathbf{X})=(g(\mathbf{X}^{(1)}),\ldots,g(\mathbf{X}^{(k)}))$. Then $f\diamond g=f\circ \mathscr{G}$ and we can apply Theorem~\ref{thm-WT-DC} to obtain
	the Walsh transform of $f\diamond g$. We have from Theorem~\ref{thm-WT-DC}, $W_{f\diamond g}(\mathbf{0}_n)=W_f(\mathbf{0}_k)$, and 
	for $\mathbf{0}_n\neq \mathbf{u}\in \mathbb{F}_2^n$, 
\begin{equation*}
    W_{f\diamond g}(\mathbf{u}) = W_f(\mathbf{w_u})\prod_{j\in \sym{supp}(\mathbf{w_u})}W_{g}\left(\mathbf{u}^{(j)}\right).
\end{equation*}
	From (\ref{eqn-H_infty}), to obtain the min-entropy of $f\diamond g$, it is required to obtain $\max_{\mathbf{u}\in \mathbb{F}_2^n}(W_{f\diamond g}(\mathbf{u}))^2$.
	Let $\bm{\alpha}_i=\argmax_{\sym{wt}(\mathbf{w})=i}W_f^2(\mathbf{w})$ for $i\in [k]$
	and let $\bm{\beta}=\argmax_{\mathbf{v}}W_{g}^2(\mathbf{v})$ (breaking ties arbitrarily in both cases). 
	Note that $a_i=W_f^2(\bm{\alpha}_i)$ and $H_{\infty}(g)=-\log W_g^2(\bm{\beta})$.
	For $\sym{wt}(\mathbf{w}_{\mathbf{u}})=i$, the maximum value of 
	$\prod_{j\in \sym{supp}(\mathbf{w_u})}W_{g}^2\left(\mathbf{u}^{(j)}\right)$ is $\left(W_g^2(\bm{\beta})\right)^i$.
	So $\max_{\mathbf{0}_n\neq \mathbf{u}\in\mathbb{F}_2^n}(W_{f\diamond g}(\mathbf{u}))^2$ is equal to $\max_{i\in [k]}W_f^2(\bm{\alpha}_i) \left(W_g^2(\bm{\beta})\right)^i=\max_{i\in [k]}a_i\left(W_g^2(\bm{\beta})\right)^i$.
	The result now follows by taking logarithms.
\end{proof}

\section{Recursive constructions \label{sec-lo-bnd}}
We wish to obtain a Boolean function $f$ such that $H_{\infty}(f)/\sym{Inf}(f)$ is as high as possible. One way to obtain $f$ is to perform an exhaustive search.
Since the number of $n$-variable Boolean functions is $2^{2^n}$, it is difficult to carry out the search for $n>5$. For $n=5$, we have performed an exhaustive
search. This resulted in 3840 5-variable Boolean functions for which the min-entropy/influence ratio is $16/7$. All the 3840 functions turned out to be unbalanced.
For the purpose of illustration, we provide one of the 3840 functions that were obtained. 
\begin{example}\label{ex-srch} Let $h$ be the following 5-variable Boolean function. 
\begin{eqnarray}\label{eqn-5-srch}
	h(X_5,X_4,X_3,X_2,X_1) & = & X_4X_3 \oplus X_5X_2 \oplus X_5X_4X_1 \oplus X_5X_4X_2 \oplus X_5X_4X_3.
\end{eqnarray}
	For $h$ defined in~\eqref{eqn-5-srch}, $H_{\infty}(h)=4$, $\sym{Inf}(h)=7/4$ and so $H_{\infty}(h)/\sym{Inf}(h)=16/7$. 
\end{example}
The question now is whether it is possible to obtain a function whose min-entropy/influence ratio is greater than 16/7?
In this section, we describe the approaches based on recursive constructions which did not provide such a function. In the next section, we describe a method
which yields a function whose min-entropy/influence ratio is greater than that of $h$.

\subsection{O'Donnell and Tan's Construction \label{subsec-OT-cons}}
We first consider the recursive construction of Boolean functions arising from the O'Donnell-Tan construction since this construction proved to be useful
for the entropy/influence ratio. 
Using Theorem~\ref{thm-ME-DC}, we obtain the following result on the min-entropy of the O'Donnell-Tan recursive construction where the initial function satisfies
the condition that there is a vector of weight 1 for which the corresponding Walsh transform value is the maximum. 

\begin{theorem}\label{thm-min-ent-OT}
	Let $g$ be an $l$-variable balanced Boolean function for which there is a $\bm{\beta}\in\mathbb{F}_2^l$ with $\sym{wt}(\bm{\beta})=1$ such that
	$W_g^2(\bm{\beta})=\max_{\mathbf{v}}W_{g}^2(\mathbf{v})$. 
	For $m\geq 0$, let $f_m$ be the Boolean function constructed using~\eqref{eqn-OT-rec} with $f_0=g$. Then for $m\geq 0$,
	\begin{eqnarray}\label{eqn-Hmin-OT}
		H_{\infty}(f_m) & = & (m+1)\cdot H_{\infty}(g).
	\end{eqnarray}
	Consequently, 
\begin{equation}\label{eqn-OT-FMEI}
    \frac{ H_{\infty}(f_m)}{\sym{Inf}(f_m)}=\left(\frac{H_{\infty}(g)}{\sym{Inf}(g)}\right)\left(\frac{m+1}{\sym{Inf}(g)^{m}}\right).
\end{equation}
\end{theorem}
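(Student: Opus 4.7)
The plan is to prove the min-entropy identity $H_{\infty}(f_m) = (m+1)\,H_{\infty}(g)$ by induction on $m$, and then derive the ratio formula~\eqref{eqn-OT-FMEI} from Theorem~\ref{thm-OT-comp}(1). The base case $m=0$ is immediate since $f_0 = g$.

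For the inductive step, I would assume $H_{\infty}(f_{m-1}) = m\cdot H_{\infty}(g)$. Since $g$ is balanced, Remark~\ref{rem-bal-g-fm} ensures that $f_{m-1}$ is balanced as well, so Theorem~\ref{thm-ME-DC} applies with the outer function $g$ (on $l$ variables) and the inner balanced function $f_{m-1}$, yielding
\[
H_{\infty}(f_m) \;=\; H_{\infty}(g \diamond f_{m-1}) \;=\; \min_{\substack{i\in\{0,\ldots,l\} \\ a_i > 0}} \bigl(-\log a_i + i\cdot H_{\infty}(f_{m-1})\bigr),
\]
where $a_i = \max_{\sym{wt}(\mathbf{w})=i} W_g^2(\mathbf{w})$.

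The heart of the argument is to show that this minimum is attained at $i=1$. The case $i=0$ is ruled out because balancedness of $g$ forces $a_0 = W_g^2(\mathbf{0}_l) = 0$. The hypothesis that the global maximum of $W_g^2$ is attained at a weight-$1$ vector $\bm{\beta}$ gives $a_1 = \max_{\mathbf{v}} W_g^2(\mathbf{v})$, so $-\log a_1 = H_{\infty}(g)$ and $a_1 \geq a_i$ for every $i\geq 1$. The $i=1$ term therefore evaluates to $H_{\infty}(g) + m\cdot H_{\infty}(g) = (m+1)H_{\infty}(g)$ by the inductive hypothesis. For $i\geq 2$, the term is at least $H_{\infty}(g) + i\cdot H_{\infty}(f_{m-1})$, which strictly exceeds the $i=1$ value because $H_{\infty}(f_{m-1}) > 0$ (balancedness of $f_{m-1}$ precludes any $W_{f_{m-1}}^2(\mathbf{v}) = 1$). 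This completes the induction and establishes~\eqref{eqn-Hmin-OT}.

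For the ratio, I would apply Theorem~\ref{thm-OT-comp}(1) inductively to get $\sym{Inf}(f_m) = \sym{Inf}(g)^{m+1}$, and~\eqref{eqn-OT-FMEI} follows by a one-line division. The only obstacle, though routine, is verifying that the index $i=1$ truly attains the minimum across all $i\in\{2,\ldots,l\}$ simultaneously: this is where both the weight-$1$ hypothesis on $\bm{\beta}$ and the strict positivity $H_{\infty}(f_{m-1}) > 0$ are indispensable. Once those two facts are combined, the rest is bookkeeping.
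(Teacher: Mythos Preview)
Your proposal is correct and follows essentially the same inductive route as the paper: apply Theorem~\ref{thm-ME-DC} to $g\diamond f_{m-1}$, rule out $i=0$ by balancedness of $g$, and use the weight-$1$ hypothesis on $\bm{\beta}$ to see that the minimum over $i\in[l]$ is attained at $i=1$; the influence formula then comes from Theorem~\ref{thm-OT-comp}. One small correction: your justification that ``balancedness of $f_{m-1}$ precludes any $W_{f_{m-1}}^2(\mathbf{v})=1$'' is false (a dictator $x\mapsto x_j$ is balanced yet has a Walsh square equal to $1$), so $H_{\infty}(f_{m-1})>0$ need not hold in general. Fortunately you do not need strict inequality---the argument only requires that the $i=1$ term is $\leq$ every $i\geq 2$ term, which already follows from $-\log a_i\geq -\log a_1=H_{\infty}(g)$ and $i\cdot H_{\infty}(f_{m-1})\geq H_{\infty}(f_{m-1})\geq 0$, exactly as in the paper's inequality~\eqref{eqn-tmp1}.
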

\begin{proof}
	Note that $H_{\infty}(g)=-\log(W_g^2(\bm{\beta}))$. 
	As observed in Remark~\ref{rem-bal-g-fm}, from the fact that $g$ is balanced it follows that $f_m$ is balanced for all $m\geq 1$. (This can also be seen
	from Theorem~\ref{thm-WT-DC}.) 

	We prove~\eqref{eqn-Hmin-OT} by induction on $m$. For $m=0$, this follows trivially. Suppose~\eqref{eqn-Hmin-OT} holds for some $m\geq 0$. 
	From Theorem~\ref{thm-ME-DC} and the fact that $f_{m+1}$ is balanced, 
	\begin{eqnarray}\label{eqn-tmp0}
		H_{\infty}(f_{m+1}) & = & H_{\infty}(g\diamond f_m)= \min_{i\in [l], a_i>0} (-\log(a_i) + i\cdot H_{\infty}(f_m)),
	\end{eqnarray}
	where $a_i=\max_{\sym{wt}(\mathbf{w})=i}W_g^2(\mathbf{w})$ for $i=1,\ldots,l$. For any $i\in [l]$, we have
	\begin{eqnarray}\label{eqn-tmp1}
		-\log(a_i)+ i\cdot H_{\infty}(f_m)\geq -\log(W_g^2(\bm{\beta})) + H_{\infty}(f_m) & = & H_{\infty}(g)+H_{\infty}(f_m)
	\end{eqnarray}
	and since $\bm{\beta}$ has weight 1, equality is 
	attained for $i=1$. So using the induction hypothesis, 
	\begin{eqnarray}\label{eqn-tmp2}
		H_{\infty}(f_{m+1}) & = & \min_{i\in [l],a_i>0} (-\log(a_i) + i\cdot H_{\infty}(f_m))=H_{\infty}(g)+H_{\infty}(f_m)=(m+2)H_{\infty}(g).
	\end{eqnarray}
	The proof of~\eqref{eqn-OT-FMEI} follows from~\eqref{eqn-Hmin-OT} and Theorem~\ref{thm-OT-comp}.
\end{proof}

To use Theorem~\ref{thm-min-ent-OT} as an amplifier of min-entropy/influence ratio it is required to obtain $m\geq 1$ such that
$H_{\infty}(f_m)/\sym{Inf}(f_m) > H_{\infty}(g)/\sym{Inf}(g)$ which holds if and only if $\sym{Inf}(g) < (m+1)^{1/m}$. For $m=1$, this condition
becomes $\sym{Inf}(g)<2$ and for higher values of $m$, the upper bound on $\sym{Inf}(g)$ is smaller. Comparing~\eqref{eqn-OT-ratio} with~\eqref{eqn-OT-FMEI}, we see that unlike 
the case of the entropy/influence ratio, increasing $m$ does not necessarily lead to a higher value of the min-entropy/influence ratio. In particular, 
the nice asymptotic analyses~\cite{o2013composition,hod2017improved} which has been done for the entropy/influence ratio is not applicable to the min-entropy/influence ratio.

To apply Theorem~\ref{thm-min-ent-OT}, we need an appropriate initial function $g$. We performed an exhaustive search over all possible 5-variable Boolean 
functions which satisfy the conditions of Theorem~\ref{thm-min-ent-OT}. For $m=1$, we obtained 384 functions such that
taking $f_0$ to be any of these functions leads to a 25-variable Boolean function $f_1$ with $H_{\infty}(f_1)/\sym{Inf}(f_1)=512/225\approx 2.275556$. 
Let $\mathcal{F}_5$ denote the set of these 384 functions. We will use the elements of $\mathcal{F}_5$
in Section~\ref{subsec-30} to build a set of 30-variable functions having the presently highest known value of the min-entropy/influence ratio.
As an example, we provide one element of $\mathcal{F}_5$. 
\begin{example}\label{ex-OT}
	Let $g$ be the following 5-variable Boolean function.
\begin{eqnarray}\label{eqn-OT-g}
	\lefteqn{g(X_5,X_4,X_3,X_2,X_1)} \nonumber \\
	& = & X_3X_2X_1 \oplus X_4 \oplus X_4X_1 \oplus X_4X_2 \oplus X_4X_2X_1 \oplus X_4X_3X_1 \oplus X_4X_3X_2 \nonumber \\
	& & \quad \oplus X_5 \oplus X_5X_1 \oplus X_5X_2X_1 \oplus X_5X_3 \oplus X_5X_3X_1 \oplus X_5X_3X_2 \oplus X_5X_4 \nonumber \\
	& & \quad \oplus X_5X_4X_1 \oplus X_5X_4X_2 \oplus X_5X_4X_3.
\end{eqnarray}
	The function $g$ defined in~\eqref{eqn-OT-g} is in $\mathcal{F}_5$. For $g$, $H_{\infty}(g)=4$, $\sym{Inf}(g)=15/8$. Taking $f_0=g$ and $f_1=f_0\diamond f_0$, from 
	Theorem~\ref{thm-min-ent-OT} we have $H_{\infty}(f_1)/\sym{Inf}(f_1)=32/15 \times 2/(15/8) = 512/225$. 
\end{example}
We note the following points.
\begin{enumerate}
	\item The 25-variable function $f_1$ obtained using the above method is not useful. The 5-variable function $h$ given in~\eqref{eqn-5-srch} obtained 
		using exhaustive search has a higher value of the min-entropy/influence ratio.
	\item In our search over all 5-variable Boolean functions, considering $m>1$ did not provide a result better than that obtained for $m=1$.
	\item In Theorem~\ref{thm-min-ent-OT}, the condition $\sym{wt}(\bm{\beta})=1$ is required to obtain the expression for $f_m$ given by~\eqref{eqn-Hmin-OT}.
		If $\sym{wt}(\bm{\beta})>1$, then equality will not be attained in~\eqref{eqn-tmp1} and so in turn will also not be attained in~\eqref{eqn-tmp2}.
		So for $\sym{wt}(\bm{\beta})>1$, we will have $H_{\infty}(f_m) < (m+1)\cdot H_{\infty}(g).$ Consequently, 
	considering $\sym{wt}(\bm{\beta})>1$, does not seem to lead to a higher value of the min-entropy/influence ratio. In our search over all 5-variable
		functions, compared to $\sym{wt}(\bm{\beta})=1$, allowing $\sym{wt}(\bm{\beta})>1$ did not lead to a higher value of the min-entropy/influence ratio.
\end{enumerate}

\section{Construction from palindromic functions \label{sec-palin}}
An $n$-variable Boolean function $g$ can be represented by a bit string of length $2^n$ in the following manner: for $i\in \{0,\ldots,2^n-1\}$, the
$i$-th bit of the string is $g(\bm{\alpha})$, where $\bm{\alpha}$ is the $n$-bit binary representation of $i$. We will denote the bit string representing $g$ also by $g$. 
The reverse of the bit string representation of $g$ is $g^r$, and $g^r$ is given by $g^r(X_n,\ldots,X_1)=g(1\oplus X_n,\ldots,1\oplus X_1)$. The following
simple result relates the Walsh transforms of $g$ and $g^r$.
\begin{proposition}\label{prop-rev}
Let $g$ be an $n$-variable Boolean function and $g^r$ be another $n$-variable Boolean function defined as
	$g^r(X_n,\ldots,X_1)=g(1\oplus X_n,\ldots,1\oplus X_1)$. Then for $\bm{\alpha}\in\mathbb{F}_2^n$, 
	$W_{g^r}(\bm{\alpha}) = (-1)^{\sym{wt}(\bm{\alpha})} W_g(\bm{\alpha})$.
\end{proposition}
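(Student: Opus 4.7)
The plan is to work directly from the definition of the Walsh transform and perform a change of variables. Starting from
\begin{equation*}
W_{g^r}(\bm{\alpha}) = \frac{1}{2^n} \sum_{\mathbf{x} \in \mathbb{F}_2^n} (-1)^{g^r(\mathbf{x}) \oplus \langle \mathbf{x}, \bm{\alpha}\rangle} = \frac{1}{2^n} \sum_{\mathbf{x} \in \mathbb{F}_2^n} (-1)^{g(\mathbf{1}_n \oplus \mathbf{x}) \oplus \langle \mathbf{x}, \bm{\alpha}\rangle},
\end{equation*}
I would substitute $\mathbf{y} = \mathbf{1}_n \oplus \mathbf{x}$. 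This is a bijection of $\mathbb{F}_2^n$ onto itself, so the sum is simply reindexed by $\mathbf{y}$, and $\mathbf{x}$ in the exponent becomes $\mathbf{1}_n \oplus \mathbf{y}$.

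The key observation is the bilinearity of the inner product modulo 2: $\langle \mathbf{1}_n \oplus \mathbf{y}, \bm{\alpha}\rangle = \langle \mathbf{1}_n, \bm{\alpha}\rangle \oplus \langle \mathbf{y}, \bm{\alpha}\rangle$. Since $\langle \mathbf{1}_n, \bm{\alpha}\rangle = \alpha_1 \oplus \cdots \oplus \alpha_n$ equals $\sym{wt}(\bm{\alpha}) \bmod 2$, the sign contribution from this term is exactly $(-1)^{\sym{wt}(\bm{\alpha})}$, which can be pulled outside the sum as a global factor.

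After pulling this factor out, the remaining sum
\begin{equation*}
\frac{1}{2^n} \sum_{\mathbf{y} \in \mathbb{F}_2^n} (-1)^{g(\mathbf{y}) \oplus \langle \mathbf{y}, \bm{\alpha}\rangle}
\end{equation*}
is by definition $W_g(\bm{\alpha})$, and the identity $W_{g^r}(\bm{\alpha}) = (-1)^{\sym{wt}(\bm{\alpha})} W_g(\bm{\alpha})$ follows immediately.

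There is no real obstacle here; the entire argument is a one-line change of variables combined with the parity identity $\langle \mathbf{1}_n, \bm{\alpha}\rangle \equiv \sym{wt}(\bm{\alpha}) \pmod 2$. The only thing to be slightly careful about is remembering that the addition inside the exponent of $(-1)^{\cdot}$ is in $\mathbb{F}_2$, so splitting $\langle \mathbf{1}_n \oplus \mathbf{y}, \bm{\alpha}\rangle$ through $\oplus$ is legitimate and corresponds to multiplication of signs.
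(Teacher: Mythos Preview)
Your proof is correct and follows essentially the same approach as the paper: a direct change of variables $\mathbf{y}=\mathbf{1}_n\oplus\mathbf{x}$ in the Walsh transform sum, followed by the identity $\langle \mathbf{1}_n,\bm{\alpha}\rangle\equiv \sym{wt}(\bm{\alpha})\pmod 2$ to extract the sign. The paper's own proof is the same one-line computation.
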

\begin{proof}
	\begin{eqnarray*}
		W_{g^r}(\bm{\alpha}) 
		& = & \sum_{\mathbf{x}\in\mathbb{F}_2^n}(-1)^{\langle\bm{\alpha},\mathbf{x}\rangle \oplus g(\mathbf{1}_n\oplus \mathbf{x})} 
		= \sum_{\mathbf{y}\in\mathbb{F}_2^n}(-1)^{\langle\bm{\alpha},\mathbf{1}_n\rangle\oplus \langle\bm{\alpha},\mathbf{y}\rangle \oplus g(\mathbf{y})} 
		= (-1)^{\sym{wt}(\bm{\alpha})} W_g(\bm{\alpha}).
	\end{eqnarray*}
\end{proof}

Given an $n$-variable Boolean function $g$, we may construct an $(n+1)$-variable Boolean $f$ function in the following manner. Concatenate the bit
string representing $g$ and $g^r$ to obtain a bit string of length $2^{n+1}$. This string represents the desired $(n+1)$-variable Boolean function $f$. The
bit string representing $f$ is a palindrome and we call $f$ to be a palindromic function. The following construction is a little more
general than the method just described. For $b\in \mathbb{F}_2$, let
\begin{eqnarray} \label{eqn-gen-palin}
	g_b(X_{n+1},X_n,\ldots,X_1) & = & (1\oplus X_{n+1})g(X_n,\ldots,X_1) \oplus X_{n+1}(b\oplus g(1\oplus X_n,\ldots,1\oplus X_1)).
\end{eqnarray}
If $b=0$, then $g_0$ is the concatenation of $g$ and $g^r$ as described above, and if $b=1$, then $g_1$ is the concatenation of $g$ and the complement of $g^r$. 
The following result shows the relation between the relevant properties of $g$ and $g_b$.
\begin{proposition}\label{prop-palin}
	Let $g$ be an $n$-variable Boolean function and $b\in \mathbb{F}_2$. Let $g_b$ be the $(n+1)$-variable Boolean function constructed from $g$ and $b$
	using~\eqref{eqn-gen-palin}. Then the following holds.
	\begin{enumerate}	
		\item For $\bm{\beta}\in \mathbb{F}_2^{n+1}$, where $\bm{\beta}=(a,\bm{\alpha})$, with $a\in\mathbb{F}_2$ and $\bm{\alpha}\in\mathbb{F}_2^n$,
	\begin{eqnarray}\label{eqn-palin-WT}
		W_{g_b}(\bm{\beta}) & = & \left(\frac{(1+(-1)^{b+\sym{wt}(\bm{\beta})})}{2}\right) W_g(\bm{\alpha}).
	\end{eqnarray}
	\item $H_{\infty}(g_b) = H_{\infty}(g)$. 
	\item $\sym{Inf}(g_b) = \sym{Inf}(g) + \epsilon_b(g)$, where 
		$\displaystyle \epsilon_b(g) = \sum_{\substack{\bm{\alpha}\in\mathbb{F}_2^n\\\sym{wt}(\bm{\alpha})\not\equiv b\bmod 2}} W_g^2(\bm{\alpha})$.
	\end{enumerate}
\end{proposition}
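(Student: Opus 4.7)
The plan is to prove part 1 first by direct computation of the Walsh sum, and then read off parts 2 and 3 as straightforward corollaries of part 1.

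For part 1, I would split the defining sum
\[
W_{g_b}(\bm{\beta}) = \frac{1}{2^{n+1}} \sum_{x_{n+1}\in\mathbb{F}_2} \sum_{\mathbf{y}\in\mathbb{F}_2^n} (-1)^{g_b(x_{n+1},\mathbf{y}) \oplus a\cdot x_{n+1} \oplus \langle\bm{\alpha},\mathbf{y}\rangle}
\]
according to whether $x_{n+1}=0$ or $x_{n+1}=1$. The $x_{n+1}=0$ branch contributes $g(\mathbf{y})$ in the exponent and immediately gives $2^n W_g(\bm{\alpha})$. The $x_{n+1}=1$ branch contributes $b \oplus g(\mathbf{1}_n\oplus\mathbf{y})$ together with a factor $(-1)^a$; here I would perform the substitution $\mathbf{z}=\mathbf{1}_n\oplus\mathbf{y}$, exactly as in the proof of Proposition~\ref{prop-rev}, to convert $\langle\bm{\alpha},\mathbf{y}\rangle$ into $\sym{wt}(\bm{\alpha}) \oplus \langle\bm{\alpha},\mathbf{z}\rangle$. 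Pulling the sign out yields $(-1)^{a+b+\sym{wt}(\bm{\alpha})}\cdot 2^n W_g(\bm{\alpha})$. Adding the two branches and using $\sym{wt}(\bm{\beta})=a+\sym{wt}(\bm{\alpha})$ (as integers) collapses the exponent to $b+\sym{wt}(\bm{\beta})$, yielding~\eqref{eqn-palin-WT}.

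For part 2, observe from~\eqref{eqn-palin-WT} that the prefactor $(1+(-1)^{b+\sym{wt}(\bm{\beta})})/2$ equals $1$ when $\sym{wt}(\bm{\beta})\equiv b\pmod 2$ and $0$ otherwise. Hence $W_{g_b}^2(\bm{\beta})$ is either $W_g^2(\bm{\alpha})$ or $0$. For each $\bm{\alpha}\in\mathbb{F}_2^n$ the two preimages $(0,\bm{\alpha})$ and $(1,\bm{\alpha})$ have weights of opposite parity, so exactly one of them satisfies the parity condition. Thus the multiset of nonzero squared Walsh coefficients of $g_b$ coincides with that of $g$, and therefore $\min(-\log W_{g_b}^2)=\min(-\log W_g^2)$, i.e.\ $H_\infty(g_b)=H_\infty(g)$.

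For part 3, I would plug the expression from part 1 into the Walsh formula~\eqref{eqn-inf-WT} for the influence. Using the bijection from part 2, for each $\bm{\alpha}$ the surviving $\bm{\beta}$ has weight $\sym{wt}(\bm{\alpha})$ if $\sym{wt}(\bm{\alpha})\equiv b\pmod 2$ and weight $\sym{wt}(\bm{\alpha})+1$ otherwise. Separating the two cases gives
\[
\sym{Inf}(g_b)=\sum_{\bm{\alpha}\in\mathbb{F}_2^n}\sym{wt}(\bm{\alpha})W_g^2(\bm{\alpha}) + \sum_{\substack{\bm{\alpha}\in\mathbb{F}_2^n\\ \sym{wt}(\bm{\alpha})\not\equiv b\bmod 2}} W_g^2(\bm{\alpha}),
\]
which is exactly $\sym{Inf}(g)+\epsilon_b(g)$.

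There is no real obstacle here; the argument is essentially bookkeeping. The only mildly delicate step is the change of variable in the $x_{n+1}=1$ branch of part~1, which must produce the sign $(-1)^{\sym{wt}(\bm{\alpha})}$ in the right place so that, after combining with the $(-1)^a$ from the $a\cdot x_{n+1}$ term and the $(-1)^b$ from the definition of $g_b$, the exponent consolidates cleanly into $b+\sym{wt}(\bm{\beta})$. Once that identity is in hand, parts 2 and 3 are immediate consequences.
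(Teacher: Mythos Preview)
Your proposal is correct and follows essentially the same route as the paper. The paper also splits the Walsh sum according to $x_{n+1}$, invokes Proposition~\ref{prop-rev} (which you simply inline via the substitution $\mathbf{z}=\mathbf{1}_n\oplus\mathbf{y}$), and then derives parts~2 and~3 from~\eqref{eqn-palin-WT} and~\eqref{eqn-inf-WT}; your bijection argument for part~3 is a slightly cleaner packaging of the same parity split the paper carries out explicitly.
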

\begin{proof}
	By definition
	\begin{eqnarray}\label{eqn-W_f}	
		W_{g_b}(\bm{\beta}) & = & \frac{1}{2^{n+1}}\sum_{\mathbf{x}\in\mathbb{F}_2^{n+1}}(-1)^{g_b(\mathbf{x})\oplus \langle \bm{\beta},\mathbf{x}\rangle}.
	\end{eqnarray}	
	We simplify the exponent in the sum.
	\begin{eqnarray}
		\lefteqn{g_b(x_{n+1},x_n,\ldots,x_1) \oplus \langle(a,\bm{\alpha}), (x_{n+1},x_n,\ldots,x_1)\rangle} \nonumber \\
		& = & (1\oplus x_{n+1})g(x_n,\ldots,x_1) \oplus 
				x_{n+1}(b\oplus g(1\oplus x_n,\ldots,1\oplus x_1)) \oplus \langle(a,\bm{\alpha}), (x_{n+1},x_n,\ldots,x_1)\rangle \nonumber \\
		& = & 
		\left\{ \begin{array}{ll}
			g(x_n,\ldots,x_1) \oplus \langle \bm{\alpha}, (x_n,\ldots,x_1)\rangle & \mbox{if } x_{n+1}=0, \\
			b\oplus g(1\oplus x_n,\ldots,1\oplus x_1) \oplus a \oplus \langle \bm{\alpha}, (x_n,\ldots,x_1)\rangle & \mbox{if } x_{n+1}=1.
		\end{array} \right. \label{eqn-tmp}
	\end{eqnarray}
	Writing $\mathbf{x}=(x_{n+1},\mathbf{y})$, where $x_{n+1}\in\mathbb{F}_2$ and $\mathbf{y}\in \mathbb{F}_2^n$, we simplify~\eqref{eqn-W_f} using~\eqref{eqn-tmp} as follows.
	\begin{eqnarray*}
		W_{g_b}(\bm{\beta}) & = & \frac{1}{2^{n+1}} \left( \sum_{\mathbf{y}\in \mathbb{F}_2^n} (-1)^{g(\mathbf{y}) \oplus \langle \bm{\alpha},\mathbf{y} \rangle} 
		+ (-1)^{a\oplus b} \sum_{\mathbf{y}\in \mathbb{F}_2^n} (-1)^{g(\mathbf{1}_n\oplus \mathbf{y}) \oplus \langle \bm{\alpha},\mathbf{y} \rangle}
		\right) \\
		& = & \frac{1}{2} \left(W_g(\bm{\alpha}) + (-1)^{a\oplus b} W_{g^r}(\bm{\alpha}) \right) \\
		& = & \frac{1}{2} \left(W_g(\bm{\alpha}) + (-1)^{a\oplus b}(-1)^{\sym{wt}(\bm{\alpha})}W_g(\bm{\alpha}) \right) \quad \mbox{(using Proposition~\ref{prop-rev})} \\
		& = & \frac{1}{2} \left(W_g(\bm{\alpha}) + (-1)^{b}(-1)^{\sym{wt}(a,\bm{\alpha})}W_g(\bm{\alpha}) \right). 
	\end{eqnarray*}
	This proves the first point. The second point follows directly from the first. 
	
	For the third point, we use~\eqref{eqn-inf-WT} to compute the influence of $g_b$ from its Walsh transform.
	\begin{eqnarray*}
		\sym{Inf}(g_b)
		& = & \sum_{a\in\mathbb{F}_2,\bm{\alpha}\in\mathbb{F}_2^n} \sym{wt}(a,\bm{\alpha}) W_{g_b}^2(a,\bm{\alpha}) \\
		& = & \sum_{a\in\mathbb{F}_2,\bm{\alpha}\in\mathbb{F}_2^n} \sym{wt}(a,\bm{\alpha}) \left(\frac{(1+(-1)^{b+\sym{wt}(a,\bm{\alpha})})}{2}\right)^2 W_g^2(\bm{\alpha}) \\
		& = & \sum_{\bm{\alpha}\in\mathbb{F}_2^n} \sym{wt}(\bm{\alpha}) \left(\frac{(1+(-1)^{b+\sym{wt}(\bm{\alpha})})}{2}\right)^2 W_g^2(\bm{\alpha}) \\
		& & + \quad \quad \sum_{\bm{\alpha}\in\mathbb{F}_2^n} (1+\sym{wt}(\bm{\alpha})) \left(\frac{(1-(-1)^{b+\sym{wt}(\bm{\alpha})})}{2}\right)^2 W_g^2(\bm{\alpha}) \\
		& = & \sum_{\bm{\alpha}\in\mathbb{F}_2^n,\sym{wt}(\bm{\alpha})\equiv b\bmod 2} \sym{wt}(\bm{\alpha}) W_g^2(\bm{\alpha}) \\
		& & + \quad \quad \sum_{\bm{\alpha}\in\mathbb{F}_2^n,\sym{wt}(\bm{\alpha})\not\equiv b\bmod 2} (1+\sym{wt}(\bm{\alpha})) W_g^2(\bm{\alpha}) \\
		& = & \sym{Inf}(g) + \epsilon_b(g).
	\end{eqnarray*}
\end{proof}
We note the following two points.
\begin{enumerate}
	\item The Walsh transform of $g_b$ is banded, i.e. it is zero for all vectors of weights congruent to $1-b$ modulo two. 
	\item From Parseval's theorem it follows that $0\leq \epsilon_b(g) \leq 1$. Further, also from Parseval's theorem, we have
		$\epsilon_0(g)+\epsilon_1(g)=1$ and so it is possible to choose $b\in\{0,1\}$ such that $\epsilon_b(g)\geq 1/2$.
\end{enumerate}

We recall two well known classes of Boolean functions. See~\cite{carlet2021boolean} for an extensive discussion on the various properties of these classes. 
Let $f$ be an $n$-variable Boolean function.
\begin{itemize}
	\item $f$ is said to be $t$-resilient, $0\leq t<n$, if $W_f(\bm{\alpha})=0$ for all $\bm{\alpha}$ with $\sym{wt}(\bm{\alpha})\leq t$. 
	\item $f$ is said to be plateaued, if $W_f(\bm{\alpha})$ takes the values $0,\pm c$, for some $c$. 
\end{itemize}
From~\eqref{eqn-inf-WT}, it follows that if $f$ is $t$-resilient, then $\sym{Inf}(f)\geq t+1$.

Next we present the main result of the paper.
\begin{theorem} \label{thm-main}
	Let $g$ be a balanced $n$-variable Boolean function, $b\in \mathbb{F}_2$ and $g_b$ be constructed from $g$ and $b$ as in~\eqref{eqn-gen-palin}. Let
	$G_b=g_b\diamond g$. Then
	\begin{eqnarray}\label{eqn-G_b}
		\frac{H_{\infty}(G_b)}{\sym{Inf}(G_b)} & = & \frac{\min_{i\in\{0,\ldots,n+1\},a_i>0}(-\log(a_i)+iH_{\infty}(g))}{\sym{Inf}(g)(\sym{Inf}(g)+\epsilon_b(g))},
	\end{eqnarray}
	where $a_i=\max_{\{\mathbf{w}:\sym{wt}(\mathbf{w})=i\}} W_{g_b}^2(\mathbf{w})$, $i=0,\ldots,n+1$.

	Further, suppose that there is a $t\geq 0$ such that $t\equiv b\bmod 2$ and $g$ is a plateaued $t$-resilient function, which is not $(t+1)$-resilient. Then
	\begin{eqnarray}\label{eqn-G_b-sp}
		\frac{H_{\infty}(G_b)}{\sym{Inf}(G_b)} & = & \frac{H_{\infty}(g)}{\sym{Inf}(g)} \left(\frac{t+3}{\sym{Inf}(g)+\epsilon_b(g)}\right).
	\end{eqnarray}
\end{theorem}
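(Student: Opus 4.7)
The plan is to handle the two conclusions in sequence, since the second is a specialisation of the first under the plateaued/resilient hypothesis. For the general formula~\eqref{eqn-G_b}, I would first compute the numerator and denominator separately and then combine. For the numerator, since the inner function $g$ in the disjoint composition $G_b=g_b \diamond g$ is balanced, Theorem~\ref{thm-ME-DC} applies directly with $f=g_b$ (a function on $n+1$ variables). This gives $H_{\infty}(G_b)=\min_{i\in\{0,\ldots,n+1\},\,a_i>0}(-\log(a_i)+iH_{\infty}(g))$ with $a_i=\max_{\sym{wt}(\mathbf{w})=i}W_{g_b}^2(\mathbf{w})$, which is exactly the numerator required. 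For the denominator, I would apply the first statement of Theorem~\ref{thm-OT-comp} (again using that $g$ is balanced) to write $\sym{Inf}(G_b)=\sym{Inf}(g_b)\cdot\sym{Inf}(g)$, and then substitute $\sym{Inf}(g_b)=\sym{Inf}(g)+\epsilon_b(g)$ from Proposition~\ref{prop-palin}(3). Taking the ratio immediately yields~\eqref{eqn-G_b}.

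For~\eqref{eqn-G_b-sp} I need to identify, under the plateaued $t$-resilient hypothesis on $g$, both the minimising index $i$ and the corresponding value $a_i$ in the numerator of~\eqref{eqn-G_b}. The plateaued condition says $W_g(\bm{\alpha})\in\{0,\pm c\}$ for a constant $c$, so $-\log c^2=H_{\infty}(g)$; combined with Proposition~\ref{prop-palin}(1), the values $W_{g_b}(a,\bm{\alpha})$ lie in $\{0,\pm c\}$, and $W_{g_b}^2(a,\bm{\alpha})$ equals $c^2$ exactly when $W_g(\bm{\alpha})\neq 0$ and $\sym{wt}(a,\bm{\alpha})\equiv b\bmod 2$. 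Consequently $a_i\in\{0,c^2\}$ for every $i$, and the minimisation reduces to finding the smallest $i$ with $a_i>0$: the answer is then $H_{\infty}(g)+iH_{\infty}(g)=(i+1)H_{\infty}(g)$.

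To locate that smallest $i$, I would argue by parity. The $t$-resilience forces $W_g(\bm{\alpha})=0$ whenever $\sym{wt}(\bm{\alpha})\leq t$, while non-$(t+1)$-resilience supplies some $\bm{\alpha}^\star$ with $\sym{wt}(\bm{\alpha}^\star)=t+1$ and $W_g(\bm{\alpha}^\star)\neq 0$. For any $(a,\bm{\alpha})$ contributing a nonzero value to $W_{g_b}$, we need $\sym{wt}(\bm{\alpha})\geq t+1$ and $a+\sym{wt}(\bm{\alpha})\equiv b\equiv t\bmod 2$. Taking $\bm{\alpha}=\bm{\alpha}^\star$ with $a=0$ gives weight $t+1\not\equiv b\bmod 2$, so this fails; taking $a=1$ gives weight $t+2\equiv b\bmod 2$, and this succeeds, with $W_{g_b}^2=c^2$. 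For $i\leq t+1$ no admissible $(a,\bm{\alpha})$ exists, so $a_i=0$ there. Hence the minimum in the numerator is achieved at $i=t+2$ with value $(t+3)H_{\infty}(g)$. Substituting into~\eqref{eqn-G_b} gives~\eqref{eqn-G_b-sp}.

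The only step with any subtlety is the parity bookkeeping in the last paragraph: one must use the hypothesis $t\equiv b\bmod 2$ together with the ``banded'' structure of $W_{g_b}$ from Proposition~\ref{prop-palin} to rule out $i\leq t+1$ and to confirm that $i=t+2$ is realisable. Everything else is a direct appeal to results already in the paper.
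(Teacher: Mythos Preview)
Your proposal is correct and follows essentially the same route as the paper's proof: both obtain the denominator by combining Theorem~\ref{thm-OT-comp} with Proposition~\ref{prop-palin}(3), the numerator from Theorem~\ref{thm-ME-DC}, and then handle the special case by using the plateaued hypothesis to make all nonzero $a_i$ equal to $2^{-H_\infty(g)}$ and the parity condition $t\equiv b\bmod 2$ to identify $i=t+2$ as the least index with $a_i>0$. The only cosmetic difference is that the paper phrases the last step as ``$g_b$ is $(t+1)$-resilient but not $(t+2)$-resilient'', whereas you argue directly at the level of the indices~$i$; the content is identical.
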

\begin{proof}
	Proposition~\ref{prop-palin} provides the expression for $\sym{Inf}(g_b)$ and Theorem~\ref{thm-OT-comp} provides the expression for $\sym{Inf}(G_b)$.
	The expression for $H_{\infty}(G_b)$ is obtained from Theorem~\ref{thm-ME-DC}. This shows~\eqref{eqn-G_b}.

	Now suppose $g$ is a $t$-resilient plateaued function such that $t\equiv b\bmod 2$. Since $g$ is plateaued, from~\eqref{eqn-palin-WT}, it follows
	that $g_b$ is also plateaued and for $a_i>0$, $-\log(a_i)=H_{\infty}(g)$. From the conditions $g$ is $t$-resilient and $t\equiv b\bmod 2$, it follows
	that $g_b$ is $(t+1)$-resilient. To see this, suppose $\bm{\beta}\in\mathbb{F}_2^{n+1}$ with $\sym{wt}(\bm{\beta})\leq t+1$. If $\sym{wt}(\bm{\beta})=t+1$,
	then since $t\equiv b\bmod 2$, we have $1+(-1)^{b+\sym{wt}(\bm{\beta})}=0$ and so $W_{g_b}(\bm{\beta})=0$; on the other hand, if $\sym{wt}(\bm{\beta})<t+1$,
	then writing $\bm{\beta}=(a,\bm{\alpha})$ with $a\in\mathbb{F}_2$ and $\bm{\alpha}\in\mathbb{F}_2^n$, and using the fact that $g$ is $t$-resilient, 
	it follows that $\sym{wt}(\bm{\alpha})\leq t$ and so $W_g(\bm{\alpha})=0$ which implies that $W_{g_b}(\bm{\beta})=0$.
	Further, since $g$ is not $(t+1)$-resilient, it follows that $g_b$ is not $(t+2)$-resilient. 
	Since $g_b$ is $(t+1)$-resilient, but not $(t+2)$-resilient, it follows that the minimum value of $i$ such that $a_i>0$ is $t+2$.
	Now using the fact that for $a_i>0$, $-\log(a_i)=H_{\infty}(g)$,
	we have $\min_{i\in\{0,\ldots,n+1\},a_i>0}(-\log(a_i)+iH_{\infty}(g))\geq H_{\infty}(g)+(t+2)H_{\infty}(g)=(t+3)H_{\infty}(g)$. This shows~\eqref{eqn-G_b-sp}.
\end{proof}

\subsection{Construction of a 30-variable Boolean function \label{subsec-30}}

By construction, if $g$ is an $n$-variable Boolean function, then the function $G_b$ in Theorem~\ref{thm-main} is an $n(n+1)$-variable Boolean function.
To use Theorem~\ref{thm-main} as an amplifier of min-entropy/influence ratio, it is required to have $H_{\infty}(G_b)/\sym{Inf}(G_b) > H_{\infty}(g)/\sym{Inf}(g)$.
If $g$ is a plateaued $t$-resilient function, then the last condition holds if and only if $t+3\geq \sym{Inf}(g) + \epsilon_b(g)$. Note, however, that
$\sym{Inf}(g)\geq t+1$ and so the condition $t+3\geq \sym{Inf}(g) + \epsilon_b(g)$ offers only a limited scope for amplification of the min-entropy/influence
ratio.

If $g$ is balanced, but not 1-resilient, i.e. $t=0$, then the amplification factor in Theorem~\ref{thm-main} is $3/(\sym{Inf}(g) + \epsilon_b(g))$. 
We compare this condition with the amplification factor for $m=1$ arising from the O'Donnell-Tan construction. From Theorem~\ref{thm-min-ent-OT}, 
the amplification factor in the O'Donnell-Tan construction is $2/\sym{Inf}(g)$. So if we use the same $g$ in both Theorems~\ref{thm-min-ent-OT} and~\ref{thm-main},
then the amplification provided by Theorem~\ref{thm-main} is greater if and only if $\sym{Inf}(g) > 2\epsilon_b(g)$. We have computationally verified that the last 
condition holds for all $g\in \mathcal{F}_5$ (for the definition of $\mathcal{F}_5$ see the discussion before Example~\ref{ex-OT}).
So if we take any of the functions in $\mathcal{F}_5$ as the initial function and apply
Theorem~\ref{thm-main}, we will obtain a function whose min-entropy/influence ratio is greater than what can be obtained by starting with the same initial
function and using one step of the O'Donnell-Tan construction.

As a concrete example, we consider the 5-variable function $g$ given in Example~\ref{ex-OT}. Using this $g$ and taking $b=0$, from~\eqref{eqn-gen-palin}, we obtain
a 6-variable function $g_0$. The function $G_0=g_0\diamond g$ is a 30-variable function. From Theorem~\ref{thm-main}, we have
$H_{\infty}(G_0)/\sym{Inf}(G_0)=128/45\approx 2.8444$. Starting with any of the 384 functions in $\mathcal{F}_5$ and applying Theorem~\ref{thm-main}, we obtain
a corresponding 30-variable function for which the min-entropy/influence ratio is also 128/45. This gives us a set of 384 30-variable functions each of which 
has min-entropy/influence ratio to be 128/45. Note that 128/45 is greater than 16/7, which is the maximum min-entropy/influence ratio that is achieved by
any 5-variable function (see Example~\ref{ex-srch} and the discussion preceeding it). Presently, 128/45 is the highest known value of min-entropy/influence ratio
that has been achieved. Correspondingly, 128/45 is also the best known lower bound on the universal constant of the min-entropy/influence conjecture.

\section{Some further search results\label{sec-srch-res}}
A Boolean function $f$ is said to be symmetric if it is invariant under any permutation of its input. The number of $n$-variable symmetric Boolean functions
is $2^{n+1}$.
O'Donnell et al.~\cite{o2011fourier} established the FEI conjecture for symmetric Boolean functions which also settles the FMEI conjecture for this class of functions.
Their proof showed that the entropy/influence ratio of any symmetric Boolean function is at most 12.04. We used exhaustive search to find the actual value of the ratio
for symmetric functions on $n$ variables with $n\leq 16$. 

For $n\geq 2$, let $A_n(X_1,\ldots,X_n)=X_1\cdots X_n$ (in terms of Boolean algebra $A_n$ is the AND function). It is easy to show (see~\cite{hod2017improved}) that
$H(A_n)/\sym{Inf}(A_n)<4$. Our search for $n\leq 16$ showed that if $f$ is an $n$-variable symmetric Boolean function, then $H(f)/\sym{Inf}(f)\leq H(A_n)/\sym{Inf}(A_n)$. 
This suggests that the ratio 12.04 that was achieved in the proof of~\cite{o2011fourier} is perhaps not the minimum possible value of the entropy/influence ratio for 
symmetric functions.

A Boolean function $f$ is said to be bent~\cite{rothaus1976bent} if all the Walsh transform values of $f$ are equal. Such functions can exist only if $n$ is even.
If $f$ is bent, then $H(f)=H_{\infty}(f)=n$. Further, $\sym{Inf}(f)=n/2$ (see~\cite{DBLP:journals/ccds/BiswasS21}). So for a bent function $f$, 
$H(f)/\sym{Inf}(f)=H_{\infty}(f)/\sym{Inf}(f)=2$.
Symmetric functions can be bent and the class of symmetric bent functions have been characterised~\cite{Sa94,MS02}. Our search for $n\leq 16$ 
showed that if $n$ is even, then for any $n$-variable 
symmetric Boolean function $f$, $H_{\infty}(f)/\sym{Inf}(f)\leq 2$ and equality is achieved if and only if $f$ is bent; on the other hand, if $n$ is odd, then
for any $n$-variable symmetric Boolean function $f$, $H_{\infty}(f)/\sym{Inf}(f)<2$.

Based on our observations, we put forth the following conjecture.
\begin{conjecture}\label{conj:fei_sym_1}
	Let $f$ be an $n$-variable symmetric Boolean function. Then 
	\begin{enumerate}
		\item $H(f)/\sym{Inf}(f)\leq H(A_n)/\sym{Inf}(A_n)$ and equality is achieved if and only if $f$ equals $A_n$.
		\item If $n$ is even, then $H_{\infty}(f)/\sym{Inf}(f)\leq 2$ and equality is achieved if and only if $f$ is bent; if $n$ is odd, then
			$H_{\infty}(f)/\sym{Inf}(f)<2$
	\end{enumerate}
\end{conjecture}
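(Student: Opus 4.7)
The plan is to exploit that for a symmetric $n$-variable Boolean function $f$, the Walsh value $W_f(\bm{\alpha})$ depends only on $\sym{wt}(\bm{\alpha})$. I would parameterize $f$ by the tuple $(f_0,\ldots,f_n)\in\mathbb{F}_2^{n+1}$ of its values on weight classes, write each distinct Walsh value $w_f(j)$ for $j=0,\ldots,n$ via a Krawtchouk-polynomial expansion $w_f(j)=\frac{1}{2^n}\sum_{i=0}^n(-1)^{f_i}K_j(i)$ with $K_j(i)=\sum_k(-1)^k\binom{j}{k}\binom{n-j}{i-k}$, and recast the three quantities as $\sym{Inf}(f)=\sum_j j\binom{n}{j}w_f(j)^2$, $H(f)=\sum_{j:\,w_f(j)\neq 0}\binom{n}{j}w_f(j)^2\log(1/w_f(j)^2)$, and $H_{\infty}(f)=\min_{j:\,w_f(j)\neq 0}\log(1/w_f(j)^2)$. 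Parseval then reads $\sum_j\binom{n}{j}w_f(j)^2=1$.

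For Part~2, I would first verify the bent case: when every $w_f(j)^2=1/2^n$ (which forces $n$ even), the formulas above give $H_{\infty}(f)=n$ and $\sym{Inf}(f)=n/2$, so the ratio equals $2$. For non-bent $f$, Parseval applied to the $2^n$ terms forces $\max_{\bm{\alpha}}W_f(\bm{\alpha})^2>1/2^n$, hence $H_{\infty}(f)<n$. To then deduce $H_{\infty}(f)<2\,\sym{Inf}(f)$, I would let $j^{\star}=\argmax_j w_f(j)^2$, write $H_{\infty}(f)=\log(\binom{n}{j^{\star}}/p_{j^{\star}})$ with $p_j=\binom{n}{j}w_f(j)^2$, and lower-bound $\sym{Inf}(f)=\sum_j j\, p_j$ using the mass that Parseval forces to lie away from $j^{\star}$ once the spike at $j^{\star}$ exceeds $1/2^n$. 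For odd $n$, where no symmetric bent function exists, I would invoke the classification of symmetric bent functions (Savicky; Maitra--Sarkar) to rule out the equality case and couple it with a quantitative strengthening of the same argument.

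For Part~1, I would begin by computing the extremal quantities directly: since $W_{A_n}(\mathbf{0})^2=(1-2/2^n)^2$ and $W_{A_n}(\bm{\alpha})^2=4/4^n$ for $\bm{\alpha}\neq\mathbf{0}$, short calculations give $\sym{Inf}(A_n)=n/2^{n-1}$ and $H(A_n)/\sym{Inf}(A_n)\to 4$ from below, matching the classical bound. To establish extremality of $A_n$ among symmetric functions, the plan is a discrete perturbation argument: starting from the tuple $(0,\ldots,0,1)$ encoding $A_n$, show that flipping any single $f_i$ strictly decreases the ratio, and extend to arbitrary symmetric $f$ via a sequence of single flips. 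Combined with the exhaustive verification already reported for $n\leq 16$, this would close the case for all $n$ via a continuity or monotonicity argument on the Krawtchouk coefficients.

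The hard part, in both parts, is the rigidity needed for the equality characterization. For Part~1, although $A_n$ is numerically extremal, showing that every departure from the tuple $(0,\ldots,0,1)$ strictly decreases $H(f)/\sym{Inf}(f)$ requires a clean inequality between Krawtchouk-weighted moments that I do not see an immediate route to; worse, the comparison is between an entropy and an influence, which scale differently under perturbation. For Part~2, the delicate regime is symmetric functions that are ``nearly bent''---a spectrum with one slightly enlarged value and the rest slightly shrunk, where ruling out a ratio strictly exceeding $2$ demands a sharp spectral gap---and where the odd-$n$ case forces a uniform strict inequality without the benefit of an extremal reference function. Both obstacles stem from the combinatorial rigidity of the Krawtchouk expansion and would likely need either a novel spectral inequality or a case analysis tied to the Maitra--Sarkar classification of symmetric bent functions.
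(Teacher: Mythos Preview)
The statement you are attempting to prove is labelled \emph{Conjecture}~\ref{conj:fei_sym_1} in the paper, and the authors explicitly state that they were unable to settle it: they tried both the closed-form Krawtchouk expression for the Walsh transform of symmetric functions and the techniques of~\cite{o2011fourier}, and report that neither was sharp enough. There is therefore no ``paper's own proof'' to compare your proposal against; the paper's only evidence is the exhaustive verification for $n\leq 16$.

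Your proposal is, by your own account, a plan with acknowledged gaps rather than a proof, and those gaps are genuine. For Part~1, the single-flip perturbation argument does not extend to arbitrary symmetric $f$ in the way you suggest: the ratio $H(f)/\sym{Inf}(f)$ is not a monotone functional along flip paths, so showing that each neighbour of the $A_n$ tuple has a smaller ratio says nothing about tuples two or more flips away. The sentence ``combined with the exhaustive verification for $n\leq 16$, this would close the case for all $n$ via a continuity or monotonicity argument'' is not a valid inference---there is no continuous parameter in $n$, and no monotonicity in $n$ has been established. For Part~2, the step from $H_{\infty}(f)<n$ to $H_{\infty}(f)<2\,\sym{Inf}(f)$ is exactly the content of the conjecture in the non-bent case; your sketch of lower-bounding $\sum_j j\,p_j$ from the Parseval constraint alone cannot work, because Parseval does not by itself prevent the mass from concentrating near $j=0$ while still having a small spike above $1/2^n$. (Also, for odd $n$ no bent function exists for any Boolean function, symmetric or not, so the Savick\'{y}/Maitra--Sarkar classification is not what is needed there; what is needed is a uniform strict bound, which you correctly flag as unresolved.)

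In short, your write-up is an honest outline of the natural approach and its obstructions, and it aligns with the paper's own assessment that the available inequalities are not sharp enough; but it does not constitute a proof, and the paper does not claim one either.
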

A closed form expression for the Walsh transform of symmetric Boolean function in terms of binomial coefficients is known~\cite{DBLP:journals/aaecc/CastroMS18}.
We could not, however, find a way to use this expression to settle the above conjecture. 
We also tried to apply the techniques from~\cite{o2011fourier} used for showing that the FEI conjecture holds for symmetric Boolean functions to settle 
Conjecture~\ref{conj:fei_sym_1},
but were not successful. The main problem is that the various inequalities used in the proof of~\cite{o2011fourier} do not seem to be sufficiently sharp to establish the 
bounds stated in the above conjecture. As mentioned above, Conjecture~\ref{conj:fei_sym_1} has been verified for $1\leq n\leq 16$. It is possible to experimentally 
verify the conjecture for additional values of $n$, but this is unlikely to provide any insight into how to settle the conjecture.

A Boolean function is said to be rotation symmetric if it is invariant under a cyclic shift of its input. 
It is not known whether the FEI (or the FMEI) conjecture holds for rotation symmetric Boolean functions. 
See~\cite{DBLP:journals/ipl/StanicaM03} for the number of rotation symmetric Boolean functions on $n$ variables.
We could perform an exhaustive search on rotation symmetric Boolean functions for $n\leq 7$. For $n=6$ and $n=7$, the maximum values of $H(f)/\sym{Inf}(f)$ 
are 3.739764 and 3.804357 respectively; and the maximum values of $H_{\infty}(f)/\sym{Inf}(f)$ are 2.168978 and 2.227449 respectively, where the maximums are
over all $n$-variable rotation symmetric Boolean function. Compared to symmetric Boolean functions, we see that the maximum value of the entropy/influence ratio
remains below 4, but the maximum value of the min-entropy/influence ratio is greater than 2. Since we could not run the experiment for higher values of $n$, we are
unable to put forward any conjecture for rotation symmetric Boolean functions.

\section{Concluding remarks \label{sec-conclu}}
Our work has opened the interesting topic of obtaining lower bounds on the universal constant of the FMEI conjecture. We have provided one method of constructing 
Boolean functions which provides the presently best known lower bound. A future challenge is to obtain other construction methods which yield functions with a 
higher value of the min-entropy/influence ratio. 
It is also interesting to look for sufficiently sharp techniques to settle Conjecture~\ref{conj:fei_sym_1}. A final open problem resulting from our work
is to settle the FEI conjecture for rotation symmetric Boolean functions.

\section*{Acknowledgements} We thank Mridul Nandi for comments.


\end{document}